\newtheorem{claim}{Claim}
\title{Strategic Contention Resolution with Limited Feedback}%\footnote{This work was partially supported by someone.}}
\author[1]{George Christodoulou}
\author[1]{Martin Gairing}
\author[2,3]{Sotiris Nikoletseas}
\author[2,3]{Christoforos Raptopoulos}
\author[1,2,3]{Paul Spirakis}
\affil[1]{Department of Computer Science, University of Liverpool, United Kingdom}%\\
\affil[2]{Computer Engineering and Informatics Department, University of Patras, Greece}%\\
\affil[3]{Computer Technology Institute \& Press ``Diophantus'', Greece
\texttt{G.Christodoulou@liverpool.ac.uk, gairing@liverpool.ac.uk, nikole@cti.gr} \\
\texttt{raptopox@ceid.upatras.gr, P.Spirakis@liverpool.ac.uk}}
\authorrunning{G. Christodoulou, M. Gairing, S. Nikoletseas, C. Raptopoulos, and P. Spirakis} %mandatory. First: Use abbreviated first/middle names. Second (only in severe cases): Use first author plus 'et. al.'
\keywords{contention resolution, acknowledgment-based protocols, game theory}%Dummy keyword -- please provide 1--5 keywords}% mandatory: Please provide 1-5 keywords
\begin{document}

\maketitle

\begin{abstract}
In this paper, we study contention resolution protocols from a game-theoretic perspective.
We focus on \emph{acknowledgment-based} protocols, where a user gets feedback from the channel only
when she attempts transmission. In this case she will learn whether her transmission was successful or not.
Users that do not transmit will not receive any feedback.
We are interested in equilibrium protocols, where no player has an incentive to deviate.

The limited feedback makes the design of equilibrium protocols a hard task as best response policies usually 
have to be modeled as Partially Observable Markov Decision Processes, which are hard to analyze.
Nevertheless, we show how to circumvent this for the case of two players and present an equilibrium protocol.
For many players, we give impossibility results for a large class of acknowledgment-based protocols, 
namely  \emph{age-based} and \emph{backoff} protocols with finite expected finishing time. Finally, we  
provide an age-based equilibrium protocol, which has infinite expected finishing time, but every player 
finishes in linear time with high probability.   
\end{abstract}

\section{Introduction}
\label{sec:intro}

{\em Contention resolution} in multiple access channels is one of the most 
fundamental problems in networking. In a multiple access channel (or broadcast channel) 
multiple users want to communicate with each other by sending messages into the channel. 
The channel is not centrally controlled, so two or more users can transmit their 
messages at the same time. If this happens then the messages collide and the transmission 
is unsuccessful. 
Contention resolution protocols specify how to resolve such conflicts, while simultaneously 
optimizing some performance measure, like channel utilization or average
throughput.

In this paper we follow the standard assumption that time is divided into discrete 
time slots, messages are broken up into fixed sized packets, and one packet fits exactly 
into one time slot. 
Moreover, we consider one of the simplest possible scenarios where there are
$n$ users, each of them having a single packet that needs to be
transmitted through the channel. 
%The channel is time slotted, so the
%sources can access it in discrete time -- typically, each packet has
%fixed length and fits within one time slot. 
When exactly one user
attempts transmission in a given slot, the transmission is
successful. However, if more than one users attempt transmission in
the same slot, a collision occurs, their transmission fails and they
need to retransmit their packages in later time slots.

Under centralized control of the users, avoiding collisions would be
simple: exactly one user would transmit at each time step,
alternating in a round-robin fashion. The complexity of the problem stems 
from the fact that there is no centralized control and therefore channel access has to
be managed by a distributed protocol. There is a large body of literature that studies
efficient distributed contention resolution protocols (see
Section~\ref{sec:related-work}). However, these protocols work under
the assumption that users will obediently follow the algorithm. In
this paper we follow~\cite{Fiat07} by dropping this assumption. 
We model
the situation as a non-cooperative stochastic game, where each user
acts as a selfish \emph{player} and tries to minimize the expected time before she
transmits successfully. Therefore a player will only obey a protocol if it
is in her best interest, given the other players stick to the protocol.

Fiat, Mansour, and Nadav~\cite{Fiat07} designed an incentive-compatible
transmission protocol which guarantees that (with high probability)
all players will transmit successfully in time linear in $n$. Their
protocol works for a very simple channel feedback structure, where
each player receives feedback of the form $0/1/2^+$ after each time
step ({\em ternary} feedback), indicating whether zero, one, or more
than one transmission was attempted.  Christodoulou, Ligett and Pyrga~\cite{CLP14}
designed equilibrium protocols for {\em multiplicity} feedback, where
each player receives as feedback the number of players that attempted
transmission\footnote{They also assume non-zero transmission costs, as
  opposed to \cite{Fiat07} and to this work.}.
% This positive result is actually based on a very negative
% observation, that the price of anarchy~\cite{KP99} in this model is
% unbounded: If transmission costs are assumed to be zero, it is an
% equilibrium strategy for all users to transmit on all time steps!
% Clearly, this is an undesirable equilibrium.  Fiat et
% al.~\cite{Fiat07} construct their efficient equilibrium using this
% bad equilibrium quite cleverly as a ``threat''--- the players agree
% that if not all players have exited the game by a certain time, they
% will default to the always-transmit threat strategy for a very large
% time interval.  This harsh penalty then incentivizes good
% pre-deadline behavior.

The above protocols fall in the class of {\em full-sensing}
protocols~\cite{G03} where the channel feedback is broadcasted to all
sources. However, in wireless channels, there are situations where full-sensing 
is not possible because of the \emph{hidden-terminal problem} \cite{TK75}. 
In this paper, we focus on {\em acknowledgment-based} protocols, which use a 
more limited feedback model 
-- the only feedback that a user gets is whether her transmission was
successful or not. A user that does not transmit cannot ``listen'' to 
channel and therefore does not get any feedback. 
In other words, the only information that a user has is the history of 
her own transmission attempts.
Acknowledgment-based protocols have been extensively studied in the literature
(see e.g. \cite{G03} and references therein).
\emph{Age-based} and \emph{backoff} protocols both belong to the class of
acknowledgment-based protocols. 

Age-based protocols can be described by a sequence 
of probabilities (one for each time-step) of transmitting in each time step.
Those probabilities are given beforehand and do not change based on the 
transmission history. The well known ALOHA protocol~\cite{Abr70}  is a special 
age-based protocol, where -- except 
for the first round -- users always transmit with the same probability.
In contrast, in \emph{backoff} protocols, the probability of transmitting 
in the next time step only depends on the number of unsuccessful transmissions 
for the user. Here, a popular representative is the 
\emph{binary exponential backoff} mechanism, which is also used by the Ethernet protocol \cite{MB76}. 

%Simple
%access channels are useful, as in some cases it is not possible to
%have full sensing because of delays~\cite{Abr70} or because the signal
%of transmission is so loud that covers the feedback of the channel. In
%such scenarios the systems rely solely on acknowledgments. 
The design
and the limitations of acknowledgment-based protocols is
well-understood %in the non-game-theoretic setting~
\cite{140906,274816} if the users are not strategic. In this paper, we focus on the 
game-theoretic aspect of those protocols.

\subsection{Our Results}
\label{sec:contribution}

We study the design of acknowledgment-based {\em equilibrium
  protocols}. A user gets feedback only when she attempts transmission,
in which case she either receives an acknowledgment, in case of
success, or she realizes that a collision occurred (by the lack of an
acknowledgment). This model allows for very limited feedback, as
opposed to {\em full-sensing} protocols studied in \cite{Fiat07,CLP14}
where all players, even those who did not attempt transmission receive
channel feedback.

The feedback models used in \cite{Fiat07,CLP14} 
allow players, at each given time, to know exactly the number of
pending players. This information is very useful for the design of
equilibrium protocols. In our case, we assume that the number of pending players is
common knowledge only at the beginning. If a player chooses not to transmit
during a time-slot, then she is not sure how many players are still in the
game. From this time on, she can only sense the existence of other pending players 
when she participates in a collision.

The analysis of acknowledgment-based equilibrium protocols requires
different techniques. In full-sensing protocols, a best response for a
source can be modeled as an optimal policy of a Markov Decision
Process (MDP)~\cite{Fiat07}. For an acknowledgment-based protocol, this is in general
no longer possible, due to the uncertainty imposed by a
non-transmission. However, the best response policy in this case can be
modeled as a {\em Partially Observable Markov Decision Processes} (POMDP),
which are more complicated to analyze.

Lack of information makes the design of equilibrium protocols a hard
task. In particular, we show in Section \ref{sec:agedependent+backoff} 
that it is impossible to design an age-based or 
backoff protocol that is in equilibrium and has finite expected finishing time%.
\footnote{Note, that for more than two players, always transmitting is an equilibrium 
protocol with \emph{infinite} expected finishing time \cite{Fiat07}.}.
These impossibility results contribute to a partial characterisation of such protocols
and even hold for the case of two players.
This stands in contrast to the
full-sensing case for which the authors in~\cite{Fiat07} give an equilibrium protocol, 
where the $k$ remaining players 
transmit with probability 
$\Theta\left(\frac{1}{\sqrt{k}} \right)$.
This protocol finishes in finite but exponential time.

In Section \ref{sec:2playerprotocol}, we introduce and analyze an equilibrium 
protocol for two players. An interesting feature of our protocol is that each player is
using only limited information of her own history. More precisely, the probability of
transmission in a time-slot, depends only on whether a player
attempted transmission in the previous slot. 
Our proof reduces the POMDP for the best response policy 
to a \emph{finite} MDP, which we then analyze. This reduction crucially relies on the nature of our protocol.
% and the fact that there are only two players. 
We further show that our equilibrium protocol is the unique stationary equilibrium protocol.

For more than two players, we present an age-based equilibrium protocol. Although it has infinite expected finishing time,
every player finishes in linear time with high probability. Our protocol circumvents the lack of information by 
maintaining an estimation on the number of pending players, which with high probability is an upper bound on the actual number. 
The protocol uses a deadline mechanism similar to~\cite{Fiat07}. 
Their protocol exploits the existence of their finite time equilibrium protocol 
mentioned above. For our more restricted model it is not known if such a finite time protocol exists 
for more than two players. 
This is the main open question left from our work.
We stress that
our negative results exclude the possibility that such a protocol can be age-based or backoff.

All missing proofs are included in a clearly marked Appendix.%, to be read at the discretion of the Program Committee.

 \subsection{Related Work}
 \label{sec:related-work}

 % Contention resolution for communication networks is a well-studied
 % problem.
 The ALOHA protocol, introduced by Abramson \cite{Abr70} 
 (and modified by Roberts~\cite{Rob72} to its slotted version), 
 is a multiple-access communication protocol, which has been around since 
 the 70's.
 %Therefore, substantial research has been dedicated to increasing the
 %efficiency of ALOHA through more sophisticated access protocols.  For
 %example, exponential backoff algorithms (see for
 %instance~\cite{1074023}), ask the users to lower the probability of
 %accessing the channel, each time they encounter a collision.
 Many subsequent papers study the efficiency of multiple-access protocols
 when packets are generated by some stochastic process (see for
 example \cite{355567,310333,903752}), while worst-case scenarios of
 bursty inputs, were studied in \cite{1074023}. To model such a
 worst-case scenario, one needs $n$ nodes, each of which must
 simultaneously transmit a packet; this is also the model we use in
 this work.
 
 A large class of contention resolution protocols explicitly deals
 with {\em conflict resolution}; where if $k\geq 2$ users collide (out of
 a total of $n$ users), then a resolution algorithm is called on to
 resolve this conflict (by ensuring that all the colliding packets are
 successfully transmitted), before any other source is allowed to use
 the
 channel~\cite{capetanakis-a,capetanakis-b,hayes78,TsM}. % For example,
 % \cite{capetanakis-a,capetanakis-b,hayes78,TsM} describe {\em tree
 %   algorithms} whose main idea is to iteratively give priority to
 % smaller groups, until all conflicts are resolved, with $\Theta(k
 % +k\log(n/k))$ makespan.
%
 There have been many positive and negative results on the efficiency
 of protocols under various information models (see \cite{G03} for an
 overview of results). When $k$ is known,~\cite{140906} provides an
 $O(k + \log k \log n)$ {\em acknowledgment-based} algorithm,
 while~\cite{274816} provides a matching lower bound. For the ternary
 model,~\cite{GW85} provides a bound of $\Omega(k(\log n/\log k))$ for
 all deterministic algorithms.
%In these and all of the results previously mentioned 
% In all of the results mentioned so far
% in this subsection, it is
% assumed that players will always follow the protocol given to them,
% even if it is not in their own best interest.
%
 
 A variety of game theoretic models of slotted ALOHA have also
 been proposed and studied; see for
 example~\cite{1031826,1154073,Altman05azouzi}.
%example~\cite{1031826,Altman04,1154073,Altman05azouzi};
 However, much of this work only considers transmission protocols that
 always transmit with the same fixed probability (perhaps as a
 function of the number of players in the game).  % By contrast, we
 % consider more complex protocols, where a player's transmission
 % probability is allowed to be an arbitrary function of her play
 % history and the sequence of feedback she has received.  
 Other game theoretic approaches have considered pricing
 schemes~\cite{1285895} and cases in which the channel quality changes
 with time and players must choose their transmission levels
 accordingly~\cite{Menache08,1288109,AMPP08}. \cite{KoutsoupiasP12}
 studied a game-thoretic model that lies between the contention and
 congestion model, where the decision of {\em when} to submit is part
 of the action space of the players. As discussed in the previous
 section, the most relevant game-theoretic model to our work, is the
 one studied by  Fiat, Mansour, and Nadav~\cite{Fiat07} and by  Christodoulou, Ligett, and Pyrga~\cite{CLP14}. In \cite{CLP14}, efficient $\epsilon$-equilibrium
 protocols are designed, but the authors assume non-zero transmission costs, in
 which case the efficient protocol of \cite{Fiat07} does not
 apply. Their protocols use {\em multiplicity} feedback (the number of
 attempted transmissions) which again falls in the class of
 full-sensing protocols.

%%% Local Variables: 
%%% mode: latex
%%% TeX-master: "ICALP_Conference_version"
%%% End: 

%%%%%%%%%\\input{model.tex}

\section{Model}
\label{sec:definitions}

\noindent \textbf{Game Structure.} Let $N=\{1, 2, \dots, n\}$ be the
set of agents, each one of which has a single packet that he wants to
send through a common channel. All players know $n$. 
We assume time is discretized into
slots $t = 1, 2, \ldots$. The players that have not yet successfully
transmitted their packet are called \emph{pending} and initially all
$n$ players are pending. At any given time slot $t$, a pending player
$i$ has two available actions, either to \emph{transmit} his packet or
to \emph{remain quiet}.
%
% A player is successful in transmitting his package at time $t$ if he
% is the only one to transmit during this time slot. On the other
% hand, whenever two or more agents try to access the channel
% (i.e. transmit) at the same time, a \emph{collision} occurs and
% their transmissions fail. Therefore, in case of collision or if the
% channel is idle (i.e. no player attempts to transmit) the set of
% pending agents remains unchanged.
%
In a \emph{(mixed) strategy}, a player $i$ transmits his packet at
time $t$ with some probability that potentially depends on information
that $i$ has gained from the channel based on previous transmission
attempts. If exactly one player transmits in a given slot $t$, then
his transmission is \emph{successful}, the successful player exits the
game (i.e. he is no longer pending), and the game continues with the
rest of the players. On the other hand, whenever two or more agents
try to access the channel (i.e. transmit) at the same slot, a
\emph{collision} occurs and their transmissions fail, in which case
the agents remain in the game. Therefore, in case of collision or if
the channel is idle (i.e. no player attempts to transmit) the set of
pending agents remains unchanged. The game continues until all players
have successfully transmitted their packets.

%\medskip

\noindent \textbf{Transmission protocols.}  Let $X_{i, t}$ be the
indicator variable that indicates whether player $i$ attempted
transmission at time $t$. For any $t \geq 1$, we denote by $\vec{X}_t$
the transmission vector at time $t$, i.e. $\vec{X}_t = (X_{1, t},
X_{2, t}, \ldots, X_{n, t})$. % At time slot $t$, the actual
% transmission vector is $\vec{X}_t$, but the players receive only
% partial feedback from the communication channel.
An {\em acknowlegment-based} protocol, uses very limited channel
feedback. After each time step $t$, only players that attempted a
transmission receive feedback, and the rest get no information. In
fact, the information received by a player $i$ who transmitted during
$t$ is whether his transmission was successful (in which case he gets
an acknowledgement and exits the game) or whether there was a
collision.% (in which case he realizes that there are more players in the game that
% attempted a transmission during $t$).

Let $\vec{h}_{i, t}$ be the vector of the \emph{personal transmission
  history} of player $i$ up to time $t$, i.e. $\vec{h}_{i, t} = (X_{i,
  1}, X_{i, 2}, \ldots, X_{i, t})$. We also denote by $\vec{h}_t$ the
transmission history of all players up to time $t$, i.e. $\vec{h}_t =
(\vec{h}_{1, t}, \vec{h}_{2, t}, \ldots, \vec{h}_{n, t})$. In an
acknowledgement-based protocol, the actions of player $i$ at time $t$
depend only (a) on his personal history $\vec{h}_{i, t-1}$ and (b) on
whether he is pending or not at $t$. A \emph{decision rule} $f_{i, t}$
for a pending player $i$ at time $t$, is a function that maps
$\vec{h}_{i, t-1}$ to a probability $\Pr(X_{i, t} = 1 | \vec{h}_{i,
  t-1})$. For a player $i \in N$, a \emph{(transmission) protocol}
$f_i$ is a sequence of decision rules $f_i = \{ f_{i, t}\}_{t \geq 1}
= f_{i, 1}, f_{i, 2}, \cdots$.

A transmission protocol is \emph{anonymous} if and only if the
decision rule assigns the same transmission probability to all players
with the same personal history. In particular, for any two players $i
\neq j$ and any $t \geq 0$, if $\vec{h}_{i, t-1} = \vec{h}_{j, t-1}$,
it holds that $f_{i, t}(\vec{h}_{i, t-1}) = f_{j, t}(\vec{h}_{j,
  t-1})$. In this case, we drop the subscript $i$ in the notation,
i.e. we write $f = f_1 = \cdots = f_n$.

We call a protocol $f_i$ for player $i$ \emph{age-based} if and only
if, for any $t \geq 1$, the transmission probability $\Pr(X_{i, t} = 1
| \vec{h}_{i, t-1})$ depends only (a) on time $t$ and (b) on whether
player $i$ is pending or not at $t$. In this case, we will denote the
transmission probability by $p_{i, t} \stackrel{def}{=} \Pr(X_{i, t} =
1 | \vec{h}_{i, t-1}) = f_{i, t}(\vec{h}_{i, t-1})$.

% For any player $i$ and time $t$, we will denote by $|\vec{h}_{i,
%   t-1}|$ the number of (failed) attempted transmissions of player $i$
% up to (and including) time $t-1$. 
% In particular, $|\vec{h}_{i,
%   0}|=0$.
A protocol is called \emph{backoff} if the decision rule at time $t$
is a function of the number of {\em unsuccessful} transmissions. % As a
% consequence, the decision rule $f_{i, t}(\vec{h}_{i, t-1})$ does not
% depend on $t$ directly. Therefore, as in history oblivious protocols,
% a backoff protocol for player $i$ can be described by a sequence of
% probabilities $\{p_{i, k}\}_{k \geq 0}$, such that $p_{i, k}$ is the
% probability of transmission in the next time step, given that $i$ has
% $k$ unsuccessful transmissions so far. In particular, for every $t$
% and history $\vec{h}_{i, t-1}$, we have $f_{i, t}(\vec{h}_{i, t-1}) =
% \Pr(X_{i, t} = 1| \vec{h}_{i, t-1}) \stackrel{def}{=} p_{i,
%   |\vec{h}_{i, t-1}|}$.
% Following~\cite{Fiat07}, 
We call a transmission protocol $f_i$ \emph{non-blocking} if and only
if, for any $t \geq 1$ and any transition history $\vec{h}_{i, t-1}$,
the transmission probability $\Pr(X_{i, t} = 1 | \vec{h}_{i, t-1})$ is
always smaller than
1. %We will say that the protocol is \emph{strictly non-blocking} if
   %and only if there is a positive constant $\epsilon$, which is
   %independent of $t$, such that $\Pr(X_{i, t} = 1 | \vec{h}_{i,
   %t-1}) \leq 1 - \epsilon$, for any $t \geq 1$ and any transition
   %history $\vec{h}_{i, t-1}$. Consequently, an age-based protocol is
   %strictly non-blocking if and only if $\sup_{t \geq 1, } p_{i, t} <
   %1$.
A protocol $f_i$ for player $i$ is a \emph{deadline protocol with
  deadline} $t_0 \in \{1, 2, \ldots\}$ if and only if $f_{i,
  t}(\vec{h}_{i, t-1}) = 1$, for any player $i$, any time slot $t \geq
t_0$ and any transmission history $\vec{h}_{i, t-1}$. A
\emph{persistent player} is one that uses the deadline protocol with
deadline $1$.

\noindent \textbf{Efficiency.} Assume that all $n$ players in the game
employ an anonymous protocol $f$. We will say that $f$ is \emph{efficient} if and only if all players will have successfully transmitted by time $\Theta(n)$ with high probability (i.e. with probability tending to 1, as $n$ goes to infinity). 
%We will say that $f$ is \emph{$(n,
%  p, T)$-efficient} if and only if the probability that all players
%will have successfully transmitted by time $T$ is at least $p$. In
%general, $T, p$ can be functions of $n$.

\noindent \textbf{Individual utility.} Let ${\vec f} = (f_1, f_2,
\ldots, f_n)$ be such that player $i$ uses protocol $f_i, i \in
N$. For a given transmission sequence $\vec{X}_1, \vec{X}_2, \ldots$,
which is consistent with ${\vec f}$, define the \emph{latency} or
\emph{success time} of agent $i$ as $T_i \stackrel{def}{=}
\inf\{t:X_{i, t} = 1, X_{j, t} = 0, ~ \forall j \neq i\}$. That is,
$T_i$ is the time at which $i$ successfully transmits.  Given
a % transmission sequence $\vec{X}_1, \vec{X}_2, \ldots,
% \vec{X}_t$, or equivalently, a
transmission history $\vec{h}_t$, % such
% that $\vec{h}_{i, t} = (X_{i, 1}, X_{i, 2}, \ldots, X_{i, t})$,
the $n$-tuple of protocols ${\vec f}$ induces a probability
distribution over sequences of further transmissions. In that case, we
write $C^{{\vec f}}_i(\vec{h}_t) \stackrel{def}{=}
\mathbb{E}[T_i|\vec{h_t}, {\vec f}] = \mathbb{E}[T_i|\vec{h}_{i, t},
{\vec f}]$ for the expected latency of agent $i$ incurred by a
sequence of transmissions that starts with $\vec{h}_t$ and then
continues based on ${\vec f}$. For anonymous protocols, i.e. when $f_1
= f_2 = \cdots = f_n = f$, we will simply write $C^{f}_i(\vec{h}_t)$
instead\footnote{Abusing notation slightly, we will also write
  $C^{{\vec f}}_i(\vec{h}_0)$ for the \emph{unconditional} expected
  latency of player $i$ induced by ${\vec f}$.
}.% , thus replacing (bold) ${\vec f}$
% with (simple) $f$

\noindent \textbf{Equilibria.} The objective of every agent is to
minimize her expected latency. We say that ${\vec f} = \{f_1, f_2,
\ldots, f_n \}$ is in \emph{equilibrium} if for any transmission
history $\vec{h}_t$ the agents cannot decrease their expected latency
by unilaterally deviating after $t$; that is, for all agents $i$, for
all time slots $t$, and for all decision rules $f'_i$ for agent $i$,
we have
\begin{equation*}
  C^{{\vec f}}_i(\vec{h}_t) \leq C^{({\vec f}_{-i}, f'_i)}_i(\vec{h}_t),
\end{equation*}
where $({\vec f}_{-i}, f'_i)$ denotes the protocol
profile\footnote{For an anonymous protocol $f$, we denote by $(f_{-i}, f'_i)$ the
profile where agent $j \neq i$ uses protocol $f$ and agent $i$
uses protocol $f'_i$.} where every
agent $j \neq i$ uses protocol $f_j$ and agent $i$ uses protocol
$f'_i$. 

% For an anonymous protocol $f$, we say that $f$ is
% in \emph{(symmetric) equilibrium} if $C^{f}_i(\vec{h}_t) \leq
% C^{(f_{-i}, f'_i)}_i(\vec{h}_t)$, where $(f_{-i}, f'_i)$ denotes the
% profile where agent $j \neq i$ uses protocol $f$ and agent $i$
% uses protocol $f'_i$.

%%% Local Variables: 
%%% mode: latex
%%% TeX-master: "ICALP_version_15022016"
%%% End: 

%\input{MDPpreliminaries.tex}

%%%%%%%%%\\input{2playerprotocol.tex}

\section{An equilibrium protocol for two
  players} \label{sec:2playerprotocol}

In this section we show that there is an anonymous
acknowledgment-based protocol in equilibrium, when
$n=2$. % (i.e. there
% are only 2 players in the game)

We define the protocol $f$ as follows: for any $t\geq 1$, player $i$ and
transmission history $\vec{h}_{i, t-1}$,
\begin{equation} \label{2-player-equilibrium}
f_{i, t}(\vec{h}_{i, t-1}) = \left\{
\begin{array}{ll}
	\frac{2}{3}, & \quad \textrm{if $X_{i,t-1}=1$ or $t=1$} \\
	1, & \quad \textrm{if $X_{i,t-1}=0$.}	 
\end{array}
\right.
\end{equation} 

\begin{theorem}
There is an anonymous acknowledgment-based equilibrium protocol for two players.
\end{theorem}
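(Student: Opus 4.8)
The plan is to fix player~2 at the protocol $f$ and prove that $f$ is a best response for player~1 after every history; since $f$ is anonymous and the two-player game is symmetric, this immediately yields the symmetric statement and hence the equilibrium condition. Because player~1 observes only her own attempts, her best-response problem is a~priori a POMDP whose only hidden coordinate is whether player~2 is still pending. The first step is to observe that, while pending, a player running $f$ behaves as a two-state Markov chain on the ``modes'' $\{2/3,1\}$ (attempt with probability $\tfrac23$ right after an attempt, and with probability $1$ right after a quiet slot), together with an absorbing \emph{done} state. I would therefore pass to the associated belief-MDP, whose state is player~1's posterior over player~2's status; this posterior is a sufficient statistic for her history.

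The crux, and the step I expect to be the main obstacle, is showing that this belief-MDP has only \emph{finitely many} reachable states, which is precisely where the shape of $f$ is used. The key is a \emph{reset property}: whenever player~1 sees a collision (and at $t=1$) she learns with certainty that player~2 is pending and attempted in that slot, so her posterior is exactly ``player~2 pending in mode $2/3$''; call this belief $\alpha$. Staying quiet at $\alpha$ produces the belief $\beta$ (player~2 done w.p.\ $\tfrac23$, pending in mode $1$ w.p.\ $\tfrac13$), and staying quiet at $\beta$ forces player~2 to finish, giving the belief $\gamma$ (player~2 done w.p.\ $1$); any collision observed at $\beta$ or $\gamma$ resets the posterior to $\alpha$. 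I would then check that transmitting from $\alpha$ yields success w.p.\ $\tfrac13$ and returns to $\alpha$ w.p.\ $\tfrac23$, transmitting from $\beta$ yields success w.p.\ $\tfrac23$ and returns to $\alpha$ w.p.\ $\tfrac13$, and $\gamma$ is absorbing up to one final successful attempt. Hence the only reachable information states are $\{\alpha,\beta,\gamma\}$, and the infinite POMDP collapses to a three-state MDP.

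With this finite MDP in hand, the remainder is a routine Bellman-optimality verification. Writing $V_\alpha,V_\beta,V_\gamma$ for the optimal expected remaining latencies, I would solve
\begin{align*}
V_\alpha &= \min\bigl\{\,1 + \tfrac{2}{3}V_\alpha,\ 1 + V_\beta\,\bigr\}, \\
V_\beta &= \min\bigl\{\,1 + \tfrac{1}{3}V_\alpha,\ 1 + V_\gamma\,\bigr\}, \\
V_\gamma &= 1,
\end{align*}
whose solution is $V_\alpha = 3$, $V_\beta = 2$, $V_\gamma = 1$. The decisive point is that at $\alpha$ both pure actions attain the value $3$ and at $\beta$ both attain $2$, so \emph{every} mixture is optimal at these states; in particular the actions prescribed by $f$ (transmit w.p.\ $\tfrac23$ at $\alpha$, transmit at $\beta$, and transmit at the off-path state $\gamma$) are optimal at every belief state. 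Since $f$ attains the minimal expected latency at each information state, player~1 cannot strictly gain by any deviation after any history, and her unconditional latency under $f$ equals $V_\alpha = 3$. Symmetry then gives the same for player~2, so $f$ is an anonymous acknowledgment-based equilibrium protocol for two players.
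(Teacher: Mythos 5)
Your proposal is correct and follows essentially the same route as the paper: your belief states $\alpha,\beta,\gamma$ (plus the terminal state) are exactly the paper's MDP states $A$, $E$, $F$, $D$ with identical transition probabilities, and your observation that both actions attain the optimal value at $\alpha$ and $\beta$ is precisely the paper's finding that the hitting times $k_{A}^{'D}=3$ and $k_{E}^{'D}=2$ hold for \emph{all} $p_A,p_E\in[0,1]$. The only real difference is that you verify the Bellman optimality equations directly, whereas the paper first invokes an existence theorem for stationary optimal policies in nonnegative-cost MDPs to justify restricting attention to stationary policies --- a supporting step you would still need to cite, since in undiscounted infinite-horizon problems a solution of the Bellman equations does not by itself certify optimality over all history-dependent policies.
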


\begin{proof}

  We will show that protocol $f$ is in equilibrium.  Let Alice and Bob
  be the two players in the system. We will show that when Bob sticks
  with playing $f$, any deviation for Alice, at any possible slot,
  will be less profitable for her. 

  Let's denote by $C^{f,j}_i$, for $j\in\{0,1\}$, the expected
success time for a pending player $i$ given that in the last round he attempted
transmission $(j=1)$ or not $(j=0)$ i.e., $C^{f,j}_i =
\mathbb{E}[T_i|\vec{h_t}, f,X_{i,t}=j]$. The following claim
asserts that the expected success time for Alice depends only on
whether she attempted a transmission or not in the previous slot. For the proof, we compute the expected time to absorption for the Markov chain ${\cal M}$ shown in Figure~\ref{MC-both-f}, starting from states $A$ and $B$. The full details can be found in Appendix \ref{appendix:claim:protocol}. 

\begin{claim}\label{claim:protocol}
$C^{f,j}_{Alice}=2+j$, for $j\in\{0,1\}$.
\end{claim}
%\noindent \textbf{Proof of Claim \ref{claim:protocol}.}
%  The situation from Alice's perspective can be modeled as a Markov
%  chain ${\cal M}$ with state space $\{A, B, C, D\}$. $A$ is the
%  initial state where both players are pending (and they both know
%  this). $A$ is reached either in $t=1$, or when Alice transmitted in
%  the previous time step and there was a collision. State $B$ models
%  the case when both players are pending, but Alice does not know
%  this, because she did not transmit in the previous time step. State
%  $C$ is reached when only Alice is pending; notice that, by
%  definition of the protocol, there is no way for Alice to know with
%  certainty between states $B$ and $C$ both herself and Bob use
%  $f$. Finally, $D$ is the state in which Alice has successfully
%  transmitted. The transition graph of ${\cal M}$ is shown in Figure
%  \ref{MC-both-f}.

  \begin{figure}[!ht]
  \centering
    \subfloat[Markov chain ${\cal M}$.\label{MC-both-f}]{%
\resizebox{.4\textwidth}{!}{%
\begin{tikzpicture}[->, >=stealth', auto, semithick, node distance=4cm]
\tikzstyle{every state}=[fill=white,draw=black,thick,text=black,scale=1.1]
\node[state]    (A)                     {$A$};
\node[state]    (B)[right of=A]         {$B$};
\node[state]    (C)[below of=A]         {$C$};
\node[state]    (D)[below of=B]         {$D$};
\path
(A) edge[loop left]        node{$\frac{4}{9}$}   (A)
    edge[bend left]        node{$\frac{1}{9}$}   (B)
    edge                   node{$\frac{2}{9}$}   (C)
    edge                   node{$\frac{2}{9}$}   (D)
(B) edge[bend left, above] node{$1$}   (A)
(C) edge                   node{$1$}                (D);
%\draw[rotate around={135:(5,5),red] ($(5,5)$) ellipse (1.1cm and 4cm)node[yshift=0cm]{};
\draw[rotate around={135:(2.2,-2.1)},red, dashed]  ($(2.2,-2.1)$) ellipse (1.1cm and 4cm)node[yshift=0cm]{};

\end{tikzpicture}
}
    }
    \hfill
    \subfloat[Markov chain ${\cal M}'$.\label{MC-Alice-deviates}]{%
\resizebox{.47\textwidth}{!}{%
\begin{tikzpicture}[->, >=stealth', auto, semithick, node distance=4cm]
\tikzstyle{every state}=[fill=white,draw=black,thick,text=black,scale=1.1]
\node[state]    (A)                     {$A$};
\node[state]    (E)[right of=A]         {$E$};
\node[state]    (F)[below of=A]         {$F$};
\node[state]    (D)[below of=E]         {$D$};
\path
(A) edge[loop left]        node{$\frac{2}{3}p_A$}   (A)
    edge[bend left]        node{$1-p_A$}            (E)
    edge                   node[,below,pos=0.2]{$\frac{1}{3}p_A$}   (D)
(E) edge[bend left, above] node{$\frac{1}{3}p_E$}   (A)
    edge                   node{$\frac{2}{3}p_E$ \phantom{space}}   (D)
    edge                   node[pos=0.28]{$1-p_E$}  (F)
(F) edge                   node{$1$}                (D);
\end{tikzpicture}
}
    }
    \caption{Markov chains used in the analysis.}
    \label{fig:MCs}
  \end{figure}
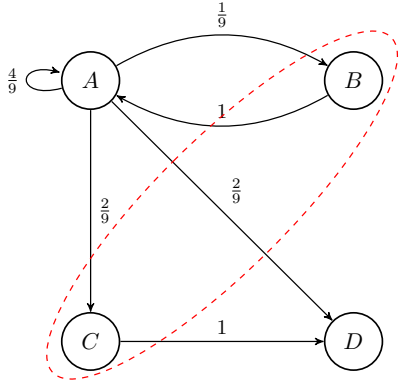
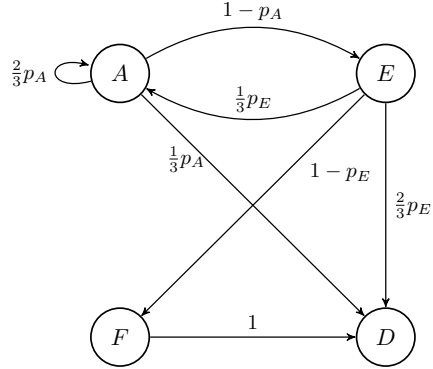

%For example, we can see from the transition graph that the probability
%that we visit state $A$ at time $t+1$, given that we are in $B$ at $t$
%is given by $\Pr({\cal M}_{t+1} = A| {\cal M}_t = B) = 1$. Indeed, if
%${\cal M}_t = B$, neither player transmitted at $t$, so both will
%transmit with probability 1 at $t+1$, causing a collision, after which
%Alice (and also Bob) can deduce that all players are still
%pending. Similarly, $\Pr({\cal M}_{t+1} = D| {\cal M}_t = C) = 1$,
%because, being at $C$ means that only Bob transmitted (successfully)
%at $t$ and so Alice will transmit (also successfully, being the only
%pending player) at $t+1$ with probability 1.

%Clearly, $C_{Alice}^{f,1}$ is equal to the expected hitting time
%$k_{A}^{D}$ that ${\cal M}$ needs to reach state $D$, given that we
%start from $A$. By definition, we have $k_C^D=1$, $k_D^D=0$, and by
%the Markov property, we get $k_{B}^D=k_A^D+1$ and $k_A^D = 1 +
%\frac{4}{9} k_A^D + \frac{1}{9} k_B^D + \frac{2}{9} k_C^D +
%\frac{2}{9} k_D^D$. By rearranging terms and making the substitutions,
%we conclude that $C_{Alice}^{f,1} = k_A^D$. 

%Calculating $C_{Alice}^{f, 0}$ is a bit more tricky, because
%since Alice did not attempt transmission at the previous slot, she
%cannot be certain in which state she is, but she knows that is at
%state $B$ with probability $1/3$ and in $C$ with $2/3$. Therefore 
%$C_{Alice}^{f,0}=\frac{1}{3}k_B^D+\frac{2}{3}k_C^D=2$. \qed

%\medskip

It remains to be shown that for any transmission history up to any time $t$, the optimal (best-response) strategy for
Alice is to follow $f$.
%given slot $t$ and in both cases,
%$X_{i,t}=0$ or $X_{i,t}=1$, the optimal (best-response) protocol for
%Alice is to follow $f$.
% starting from state $A$ incurs latency equal to 3. To this
% end, suppose that Bob uses protocol $f$ and Alice tries to figure out
% her best strategy in order to minimize her own latency.
Notice that this situation from Alice's point of view can be described
by an infinite-horizon, undiscounted \emph{Partially observable Markov
  Decision Process (POMDP)}, by the direct modification of the Markov
chain ${\cal M}$ that is described in the proof of
Claim~\ref{claim:protocol}. This process is partially observable due
to the uncertainty created whenever Alice does not attempt
transmission. This creates complications in the analysis, as general
results about the existence of optimal stationary policies in MDPs
\cite{MDPbook}, do not carry over immediately and also optimal
policies are not always well-defined for undiscounted POMDPs with
infinite horizon~\cite{Pl80}. Fortunately, by exploiting the nature of
our specific protocol $f$, and in particular the fact that a player
using $f$ never misses two transmissions in a row, we are able to
circumvent this difficulty and model the situation as an MDP.

Following the notation in \cite{Nbook}, the \emph{state space} of the
MDP is ${\cal I} = \{A, E, F, D\}$. The states are interpreted as
follows: As in the Markov chain ${\cal M}$, state $A$ 
describes the situation in which both players are pending and they both
know it (this is reached just after a collision, or at time $t=1$) and state $D$
corresponds to the state in which Alice successfully transmitted. $F$ is the
state in which Alice 
did not transmit for two consecutive rounds. Since Bob follows $f$, he will have 
transmitted in one of these two rounds. Thus, in $F$ Alice 
is the only pending player and she knows it. Note that in $F$ the unique optimal strategy 
for Alice is to transmit in the next round.
%in
%particular, when Bob uses protocol $f$, 
%this happens at some time $t$
%if Alice did not transmit at $t-1$ and $t-2$
%, allowing Bob to
%successfully transmit. 
Finally, $E$ is the state in which Alice is
uncertain whether she is the only pending player in the system; this
happens at $t$ if she did not transmit at $t-1$, but transmitted at
$t-2$ and there was a collision. State $E$ essentially corresponds to a combination of 
states $B$ and $C$ in Figure \ref{MC-both-f}.%${\cal M}$.

Since Alice clearly starts at state $A$, the \emph{initial
  distribution} of the MDP is $\lambda$, where $\lambda_A = 1$ and
$\lambda_E = \lambda_F = \lambda_D = 0$. The set of \emph{actions} for
Alice is ${\cal A} = [0, 1]$. In particular, if Alice decides to take
action $a \in {\cal A}$ at time $t$, then she will transmit with probability
$a$ at $t$. Furthermore, the \emph{cost function} of the MDP is $c(a)
= (c_s(a): s \in {\cal I})$ and we have $c_A(a) = c_E(a) = c_F(a) = 1$
and $c_D(a) = 0$ for all $a \in {\cal A}$. Finally, for the
\emph{transition matrix} of our MDP, notice that, since the MDP
describes the situation from Alice's perspective, we calculate
transition probabilities by ``deferring'' the relevant decisions taken
by Bob until the time that Alice gets feedback% \footnote{We are only
  % able to do this as a direct consequence of the definition of $f$
  % and, in particular, of the fact that a player using $f$ never misses
  % two transmissions in a row. It is worth noting that, without using
  % the special structure of $f$, we would instead have a
  % \emph{Partially observable MDP (POMDP)}. Unfortunately, however, we
  % could not find results from the existing literature for POMDPs that
  % are equivalent to Lemma 5.4.2 and Theorem 5.4.3 from \cite{Nbook},
  % which state that there is a stationary strategy that is
  % optimal.}
. The transition matrix of our MDP is shown in equation
(\ref{MDP-transition-matrix}) and it is explained in more detailed
below.
\begin{equation} \label{MDP-transition-matrix}
P(a) = \left[ 
\begin{array}{cccc}
	\frac{2a}{3} & 1-a & 0 & \frac{a}{3} \\
	\frac{a}{3} & 0 & 1-a & \frac{2a}{3} \\
	0 & 0 & 1-a & a \\
	0 & 0 & 0 & 1
\end{array}
\right].
\end{equation}
%For convenience, the transition graph of our MDP is shown in Figure \ref{MDP}.
%\begin{figure}[htb]
%\centering 
%\includegraphics[scale=1.0]{Figures/MDP} 
%\caption{Markov decision process.}
%\label{MDP}
%\end{figure}
In particular, we can see from
(\ref{MDP-transition-matrix}) %the transition graph
that the probability to visit state $A$ in one step, given that we are
at state $E$ and the action taken is $a \in [0, 1]$, is $P_{E, A}(a) =
\frac{a}{3}$. Indeed, this happens at some time $t$ if at time $t-1$
Alice did not transmit but Bob did not transmit either; therefore, by
definition of $f$, given that we are at $E$ (i.e. Alice did not
transmit at time $t-1$), the probability that we reach $A$ is equal to
the probability that Alice transmits at $t$ (which happens with
probability $a$) multiplied by the probability that Bob did not
transmit at $t-1$ (which happens with probability
$\frac{1}{3}$). Similarly, the probability that we visit $D$ in one
step, given that we are at state $E$ and the action taken is $a \in
[0, 1]$, is $P_{E, D}(a) = \frac{2a}{3}$, which is the probability
that Alice transmits in the current step and Bob transmitted in the
previous one (in which Alice did not transmit, thus Bob was
successful).

%By Theorem \ref{thm:stationary-optimal-policy} (which is a restatement
%of Lemma 5.4.2 and Theorem 5.4.3 from \cite{Nbook}), 
By Lemma 5.4.2 and Theorem 5.4.3 from \cite{Nbook}, there is a
stationary policy (i.e. protocol) $u^*$ that is optimal in the sense
that it achieves the minimum expected total cost, given that we start
at state $A$. The fact that $u^*$ is stationary significantly reduces
the search space of optimal strategies. In particular, this allows us
to only consider strategies for which the actions taken by Alice (in
the above MDP) depend only on the current state. In fact, we can
further reduce the family of optimal strategies considered by noting
that in any optimal strategy Alice will transmit with probability 1
when in state $F$; indeed, when Alice knows that she is the only
pending player, she will decide to transmit with probability 1 in the
next time step. Therefore, it only remains to determine the
probability of transmission when we are at either state $A$ of $E$;
denote those by $p_A$ and $p_E$ respectively. Therefore, this leads to
a Markov chain ${\cal M'}$ with state space ${\cal I}' = {\cal I}$ and
transition probabilities that correspond to actions from the above
MDP. The transition graph of ${\cal M'}$ is shown in Figure
\ref{MC-Alice-deviates}.

%\begin{figure}[htb]
%\centering 
%\includegraphics[scale=1.0]{Figures/MC-Alice-deviates} 
%\resizebox{.4\textwidth}{!}{%
%\begin{tikzpicture}[->, >=stealth', auto, semithick, node distance=4cm]
%\tikzstyle{every state}=[fill=white,draw=black,thick,text=black,scale=1.1]
%\node[state]    (A)                     {$A$};
%\node[state]    (E)[right of=A]         {$E$};
%\node[state]    (F)[below of=A]         {$F$};
%\node[state]    (D)[below of=E]         {$D$};
%\path
%(A) edge[loop left]        node{$\frac{2}{3}p_A$}   (A)
%    edge[bend left]        node{$1-p_A$}            (E)
%    edge                   node[,below,pos=0.2]{$\frac{1}{3}p_A$}   (D)
%(E) edge[bend left, above] node{$\frac{1}{3}p_E$}   (A)
%    edge                   node{$\frac{2}{3}p_E$ \phantom{space}}   (D)
%    edge                   node[pos=0.28]{$1-p_E$}  (F)
%(F) edge                   node{$1$}                (D);
%\end{tikzpicture}
%}
%\caption{Markov chain ${\cal M}'$.}
%\label{MC-Alice-deviates}
%\end{figure}

Clearly, the expected latency of Alice when she uses protocol $u^*$
and Bob uses protocol $f$ is equal to the expected hitting time
$k_{A}^{'D}$ that ${\cal M}'$ needs to reach state $D$, given that we
start from $A$. By definition, we have $k_{F}^{'D} = 1, k_{D}^{'D} =
0$, and by the Markov property, we get $k_{A}^{'D} = 1 + \frac{2}{3} p_A k_{A}^{'D} + (1-p_A) k_{E}^{'D}$ and $k_{E}^{'D} = 1 + \frac{1}{3} p_E k_{A}^{'D} + (1-p_E) k_{F}^{'D}$.
%\begin{eqnarray*}
%k_{A}^{'D} & = & 1 + \frac{2}{3} p_A k_{A}^{'D} + (1-p_A) k_{E}^{'D} \\
%k_{E}^{'D} & = & 1 + \frac{1}{3} p_E k_{A}^{'D} + (1-p_E) k_{F}^{'D} 
%\end{eqnarray*}
Rearranging and after substitutions we get $k_{A}^{'D} =
3$ and $k_{E}^{'D} = 2$, for any $p_A, p_E \in [0, 1]$. Comparing this to Claim \ref{claim:protocol}, 
we conclude that if Bob uses  $f$, a best response for Alice is to also follow $f$. 
%Indeed, if $X_{Alice, t} = 1$ (i.e. Alice transmitted in the last round) and Alice is still pending, then we are at state $A$, and the expected latency of Alice is at least $C_{Alice}^{f, 1} = 3$. On the other hand, if $X_{Alice, t-1} = 1$ and $X_{Alice, t} = 0$, then we are at state $E$ and the expected latency of Alice is at least $C_{Alice}^{f, 0} = 2$. Finally, if $X_{Alice, t-1} = 0$ and $X_{Alice, t} = 0$, then we are at state $F$, which is never reachable from $A$ or $E$ if Alice uses $f$. 
This completes the proof of the Theorem. \end{proof}

%the expected latency of
%Alice cannot be less than 3, which is the expected latency that Alice
%gets when she uses protocol $f$. We have thus proved the following:

%\end{proof}
% \begin{theorem}
% Assuming binary feedback and $n=2$ (i.e. there are 2 players in the system), protocol $f$ defined in (\ref{2-player-equilibrium}) is in symmetric equilibrium.
% \end{theorem}

%%% Local Variables: 
%%% mode: latex
%%% TeX-master: "ICALP_Conference_version"
%%% End: 

%%%%%%%%%\\input{uniqueness.tex}

\subsection{Uniqueness}
\label{sec:uniqueness}

We will say that a protocol is \emph{stationary} if the decision rule for each player at some time $t$ depends on the information state of the player at $t$. In particular, the protocol defined in equation (\ref{2-player-equilibrium}) is stationary. In this section we show that there are no other stationary equilibria. 

\begin{theorem} 
For two players, the unique stationary anonymous protocol that is in equilibrium is the one defined in equation (\ref{2-player-equilibrium}).
\end{theorem}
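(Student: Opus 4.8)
The plan is to characterize all stationary anonymous protocols, set up the best-response condition that any equilibrium must satisfy, and show the resulting constraints force the transmission probabilities to be exactly those in~(\ref{2-player-equilibrium}). A stationary anonymous protocol is determined by the transmission probability as a function of the player's \emph{information state}. Following the structure established in the proof of the Theorem, the relevant information states for a player are: ``just collided or at $t=1$'' (state $A$), ``missed one transmission after a collision'' (state $E$), and ``missed two in a row'' (state $F$). So a general stationary protocol is specified by a triple of probabilities, say $(q_A, q_E, q_F)$, giving the transmission probability in each information state. I would first argue that in any equilibrium we must have $q_F = 1$: when a player has missed two consecutive rounds, she knows (since the opponent also follows a protocol that never misses two in a row, assuming the opponent's protocol has the same structure) that she is the unique pending player, so transmitting with probability $1$ strictly dominates anything else. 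This pins down one coordinate.

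\textbf{Next I would set up the symmetric equilibrium fixed-point equations.} Suppose both players use the candidate protocol $(q_A, q_E, q_F{=}1)$. By anonymity and stationarity, I can write down the hitting-time (expected-latency) equations exactly as in the main proof, but now with the \emph{opponent's} transmission probabilities $q_A, q_E$ entering the transition probabilities, rather than the fixed values $\frac{2}{3}, 1$. This yields expressions for $C^{f,1}=C_A$ and $C^{f,0}=C_E$ as functions of $(q_A,q_E)$. The equilibrium condition is then a \emph{self-consistency}: given that the opponent plays $(q_A,q_E)$, each player's best response—obtained by minimizing her own expected latency over her own choice of transmission probabilities in states $A$ and $E$—must itself equal $(q_A,q_E)$. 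Concretely, for a probability to be a best response strictly between $0$ and $1$, the player must be indifferent between transmitting and not transmitting in that information state; a corner (probability $0$ or $1$) requires the corresponding one-sided inequality. I would write the indifference/optimality conditions in each of states $A$ and $E$ and solve the system.

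\textbf{The main work, and the main obstacle,} is the case analysis ruling out all the corner and mixed solutions except the intended one. One has to check that $q_A=1$ cannot be an equilibrium (if both players always transmit, they collide forever, giving infinite latency, so deviating to not transmit is profitable), and similarly that $q_A=0$ is infeasible (a player who never transmits in state $A$ never exits). Likewise the value of $q_E$ is forced: the self-consistency of the best response in state $A$ should pin down $q_A=\frac{2}{3}$, and the optimality condition in state $E$ (where the player assesses whether to transmit given the residual uncertainty about whether the opponent is still pending) should force $q_E=1$. I expect the indifference equation in state $A$ to be a small rational equation in $q_A$ whose only root in $(0,1)$ is $\frac{2}{3}$, while in state $E$ the comparison of expected latencies will show transmission is strictly preferred, forcing the corner $q_E=1$.

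\textbf{Finally I would assemble these} into a uniqueness conclusion: the only triple $(q_A,q_E,q_F)$ satisfying all the equilibrium constraints is $(\frac{2}{3},1,1)$, which is precisely the protocol~(\ref{2-player-equilibrium}). Since the previous Theorem already establishes that this protocol \emph{is} an equilibrium, this shows it is the unique stationary anonymous equilibrium. The delicate point throughout is that the best-response analysis for a \emph{deviating} player against a general stationary opponent reuses the POMDP-to-MDP reduction from the main proof—I must verify that the same reduction (the opponent never misses two rounds in a row, so the deviator's knowledge collapses appropriately) remains valid when the opponent's probabilities are the unknowns $q_A,q_E$ rather than the specific equilibrium values, which is what keeps the state space finite and the hitting-time equations well-defined.
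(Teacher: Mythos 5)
There is a genuine gap, and it sits exactly at the point you flag as ``delicate'' at the end of your proposal: the POMDP-to-MDP reduction, and with it your whole three-state parametrization $(q_A,q_E,q_F)$, is valid only when the \emph{opponent's} protocol never misses two consecutive rounds, i.e.\ only when the opponent's $q_E=1$. But $q_E=1$ is one of the things the uniqueness theorem must establish. If the opponent's protocol allows two consecutive quiet rounds ($q_E<1$), then after two misses a player does \emph{not} know she is alone, so the state $F$ does not exist as an information state; the deviator's belief about whether the opponent is still pending depends on the whole count of consecutive misses, her best response is a genuine infinite-state POMDP, and the indifference equations you intend to write in states $A$ and $E$ are not even well defined. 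Your argument that $q_F=1$ ``since the opponent also follows a protocol that never misses two in a row'' is circular for the same reason. Relatedly, a general stationary protocol is not a triple: it is an infinite sequence $(p_1,p_2,p_3,\ldots)$ giving the transmission probability after $k-1$ consecutive quiet rounds since the last collision (or the start), so restricting attention to $(q_A,q_E,q_F)$ already forfeits the generality the theorem requires.

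The paper closes precisely this case with an idea your proposal lacks: Lemma~\ref{lemma:bestresponses}, which says that in an acknowledgment-based equilibrium every finite deterministic deviation that is consistent with the protocol yields the same expected latency as the protocol itself. Supposing an equilibrium $z$ with $p_2<1$, the paper compares the three deviations ``transmit now'', ``wait one round then transmit'', and ``wait two rounds then transmit'' (each followed by $z$); the lemma forces $\alpha_z=\alpha_{(1z)}=\alpha_{(01z)}=\alpha_{(001z)}$, the first two equalities give $\alpha_z=2+p_2<3$, while trivially $\alpha_{(001z)}\geq 3$ --- a contradiction obtained without ever solving a best-response problem against the unknown opponent. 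Once $p_2=1$ is forced, the opponent really does never miss two rounds in a row, the finite Markov-chain analysis becomes legitimate, and an argument in the spirit of yours finishes the job (the paper computes the symmetric latency $k_A^D=\frac{2-p}{2p(1-p)}$ for $p_1=p$ and exhibits explicit profitable deviations for $p\neq\frac{2}{3}$, rather than solving indifference conditions). So your plan covers only the half of the case analysis in which the reduction is sound; the other half requires the consistency lemma, not a fixed-point/indifference computation.
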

\proof For the sake of contradiction, assume that there is another stationary protocol that is in equilibrium. As in the analysis of protocol (\ref{2-player-equilibrium}) in Section \ref{sec:2playerprotocol}, we denote by $A$ the state where both players know they are both pending. Let Alice be one of the two players. Notice that, every time Alice transmits, either there is a collision (in which case Alice returns to state $A$) or the transmission is successful (so Alice is no longer pending).

For $k=1, 2, \ldots,$ let $p_k$ denote the probability that Alice transmits in step $k$, given that she starts from $A$ at $t=0$ and she does not transmit in time steps $1$ to $k-1$. Therefore, given that we start from $A$ at time 0, the probability that Alice attempts to transmit for the first time after $k$ steps is $p_{k}\prod_{k'=1}^{k-1} (1-p_{k'})$. In particular, in the equilibrium described in the previous section, we had
$p_1=\frac{2}{3}$ and $p_2=1$.

%We assume that if a player
%  is in state $A$ then they always play with $p$. So if in the last
%  round a player submitted and didn't pass then he knows for sure that
%  he is in state $A$, and therefore he plays with $p$. Let $p_k$ be
%  the probability that a player submits in round $k$ (starting from
%  state $A$), where $p=p_1$.

%Suppose that $p_2<1$. Clearly $p_2>0$

%In the equilibrium described in the previous section, we had
%$p_1=\frac{2}{3}$ and $p_2=1$. 
%
First, assume there is another stationary protocol $g$ that is in equilibrium, for which $p_2=1$ and $p_1=p\neq \frac{2}{3}$. 
Adjusting the transition probabilities in the Markov chain in Figure \ref{MC-both-f} accordingly,
and doing the same analysis we can derive that the expected latency of Alice when both players use protocol $g$ is 
$k_A^D=\frac{2-p}{2p(1-p)}$. We will show that for all $p\neq \frac{2}{3}$ a player has a profitable
deviation. Indeed, first observe that $p>\frac{2}{3}$ implies $k_A^D>3$. In this case Alice can improve her expected latency by not transmitting 
for two consecutive time steps and then (successfully) transmitting in the third time step. Second, for the case that $p<\frac{2}{3}$, persistently 
transmitting in each time step is a deviation which gives the deviator an expected latency of $\frac{1}{1-p}$.
For $p<\frac{2}{3}$ this is strictly less than the expected latency $k_A^D=\frac{2-p}{2p(1-p)}$ that Alice has when both players use protocol $g$.
From both cases, we conclude that there is no stationary protocol in equilibrium for which $p_2=1$ and $p_1\neq \frac{2}{3}$. 

Now assume that there is another stationary protocol $z$ in equilibrium, for which $p_2<1$. Denote $\alpha_z$ the expected 
latency of Alice when both players use protocol $z$. Similarly denote $\alpha_{(z')}$ the expected latency of Alice when she unilaterally deviates from $z$ to some other protocol $z'$. 
We will consider the following three protocols that Alice can use instead of $z$: (i) Using protocol $(1z)$, Alice will transmit in the first time step and then continue by following protocol $z$. 
(ii) Using protocol $(01z)$, Alice will not transmit in the first time step, but will transmit in the second time step and then follow the protocol $z$. 
(iii) Finally, using protocol $(001z)$, Alice will not transmit for the first two time steps, but will transmit in the third time step and then follow the protocol $z$.
The expected latency of Alice when she uses each of those protocols while the other player uses $z$ is given by:
\begin{eqnarray*}
\alpha_{(1z)}  & = & 1 + p_1 \alpha_z \\
\alpha_{(01z)} & = & 2 + (1-p_1)p_2 \alpha_z \\
\alpha_{(001z)} & = & 3 + (1-p_1)(1-p_2)p_3 \alpha_z. 
\end{eqnarray*}
Notice now that all three transmission sequences $(1), (0, 1)$ and $(0, 0, 1)$ are consistent with $z$. Furthermore, $z$ is acknowledgment-based, so Lemma \ref{lemma:bestresponses} applies here. Therefore, the above expected latencies must all be equal to $\alpha_z$.
Using the identities $\alpha_z=\alpha_{(1z)}=\alpha_{(01z)}$ we get that 
$\alpha_z=2+p_2 < 3$. But clearly $3 \leq \alpha_{(001z)}$, which is a contradiction to the fact that $\alpha_z=\alpha_{(001z)}$. 
Thus, there is no equilibrium protocol with $p_2<1$. This completes the proof of the theorem. \qed

%\input{persistentwinner.tex}

%\input{agedependent.tex}

%%%%%%%%%\\input{agedependent+backoff.tex}

\section{Age-based and backoff protocols}
\label{sec:agedependent+backoff}

In this section, we focus on two special prominent classes of
acknowledgment-based protocols, namely {\em age-based} and {\em
  backoff}, and we show that these cannot be implemented in
equilibrium if we insist on finite expected latency.

In what follows, for any protocol $f$, any player $i$ that uses $f$
and any time $t$, we will say that $\vec{h}_{i, t}$ is
\emph{consistent with $f$} if and only if there is a non-zero
probability that $\vec{h}_{i, t}$ will occur for player $i$. 

Now we are ready to show in the next Lemma a useful property of all
\emph{acknowledgment-based} equilibrium
protocols % In acknowledgment-based
% protocols (see \cite{G03}), users make retransmission decisions
% depending only on their personal transmission history. It is not hard
% to see that the class of acknowledgment-based protocols contains the
% classes of age-based and backoff protocols.
that is essentially an analogue of the property of Nash equilibria for
finite games that all pure strategies in the support of a Nash
equilibrium are best responses.

\begin{lemma} \label{lemma:bestresponses} Let $f \stackrel{def}{=}
  \{f_t\}_{t \geq 1}$ be an anonymous acknowledgment-based protocol
  and let $\pi \stackrel{def}{=} \pi_1, \pi_2, \ldots$ be any 0-1
  sequence which is consistent with $f$. For any (finite) positive
  integer $\tau^*$, define the protocol
  \begin{equation} \label{eq-deviator} 
	g = g(\tau^*) \stackrel{def}{=}
    \left\{
\begin{array}{ll}
	\pi_t, & \quad \textrm{for $1 \leq t \leq \tau^*$} \\
	f_t, & \quad \textrm{for $t > \tau^*$.}
      \end{array}
 \right.
\end{equation}
We then have that, for any fixed player $i$, if $f$ is in equilibrium,
then
\begin{equation*}
C^f_i(\vec{h}_0) = C^{(f_{-i}, g)}_i(\vec{h}_0).
\end{equation*}
% where $(f_{-i}, g)$ denotes the family of protocols where player $i$
% uses protocol $g$ and all other players use protocol $f$.
\end{lemma}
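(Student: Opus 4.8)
The plan is to establish a \emph{one-step indifference principle} --- at any history, every action in the support of $f$ is itself a best response and hence yields the same continuation cost --- and then to replace the first $\tau^*$ decision rules of $f$ by the fixed bits $\pi_1, \ldots, \pi_{\tau^*}$ one step at a time, invoking this principle at each swap to see that the expected latency never changes. Note that equilibrium by itself only gives $C^f_i(\vec{h}_0) \le C^{(f_{-i}, g)}_i(\vec{h}_0)$; the content of the lemma is the reverse inequality, and it is exactly consistency of $\pi$ that supplies it.

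For the indifference principle I would fix a history $\vec{h}_{i, t-1}$, write $p = f_t(\vec{h}_{i, t-1})$, and let $u_1$ (resp. $u_0$) be the expected latency of player $i$ if she transmits (resp. stays quiet) at time $t$ and then reverts to $f$, with everyone else using $f$. Conditioning on player $i$'s action at $t$ gives $C^f_i(\vec{h}_{i, t-1}) = p\, u_1 + (1-p)\, u_0$. Each of $u_1, u_0$ is the cost of a deviation that only fixes the action at time $t$ and then follows $f$, so applying the equilibrium hypothesis at the history $\vec{h}_{i, t-1}$ yields $u_1 \ge C^f_i(\vec{h}_{i, t-1})$ and $u_0 \ge C^f_i(\vec{h}_{i, t-1})$. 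A convex combination of two quantities, each at least the value, can equal the value only if every term receiving positive weight attains it; thus for every action $a$ in the support of $f_t(\vec{h}_{i, t-1})$ we have $u_a = C^f_i(\vec{h}_{i, t-1})$.

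Next I would run the induction. For $0 \le k \le \tau^*$ let $g^{(k)}$ be the protocol that plays $\pi_1, \ldots, \pi_k$ deterministically (while pending) in the first $k$ steps and then follows $f$, so that $g^{(0)} = f$ and $g^{(\tau^*)} = g$. The claim is $C^{(f_{-i}, g^{(k)})}_i(\vec{h}_0) = C^f_i(\vec{h}_0)$ for all $k$. Since $g^{(k-1)}$ and $g^{(k)}$ play the identical actions $\pi_1, \ldots, \pi_{k-1}$ over the first $k-1$ steps against the same opponents using $f$, they induce the same distribution of outcomes there; in particular the event and latency of player $i$ succeeding before step $k$ are identically distributed, and conditioned on her still being pending at step $k$ her personal history is deterministically $(\pi_1, \ldots, \pi_{k-1})$ under both. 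The only difference is the action at step $k$: under $g^{(k-1)}$ player $i$ draws it from $f_k(\pi_1, \ldots, \pi_{k-1})$ and then reverts to $f$, giving continuation cost $C^f_i(\pi_1, \ldots, \pi_{k-1})$, whereas under $g^{(k)}$ she plays the fixed bit $\pi_k$ and then reverts to $f$, giving continuation cost $u_{\pi_k}$. Because $\pi$ is consistent with $f$, the bit $\pi_k$ lies in the support of $f_k(\pi_1, \ldots, \pi_{k-1})$, so the indifference principle gives $u_{\pi_k} = C^f_i(\pi_1, \ldots, \pi_{k-1})$. Hence the two continuation costs agree, the total expected latencies of $g^{(k-1)}$ and $g^{(k)}$ coincide, and taking $k = \tau^*$ finishes the proof.

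The main obstacle I expect is not the algebra but the bookkeeping in the inductive step: one must check that fixing player $i$'s action at step $k$ (instead of sampling it from $f_k$) genuinely leaves everything before step $k$ untouched, including the probability that she has already transmitted successfully and exited, so that the expected latencies decompose into an identical ``success before $k$'' part plus a continuation part that the indifference principle equates. A second point requiring care is verifying that consistency of $\pi$ really certifies $\pi_k$ as a support action at the \emph{specific} reached history $(\pi_1, \ldots, \pi_{k-1})$: positivity of the probability of the prefix $(\pi_1, \ldots, \pi_k)$ forces $\pi_k$ into the support of $f_k(\pi_1, \ldots, \pi_{k-1})$, even though it also implicitly requires each earlier transmission in $\pi$ to collide (with positive probability) for player $i$ to remain pending.
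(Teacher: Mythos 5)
Your overall plan --- a one-step indifference principle plus an induction that swaps in the bits $\pi_1,\ldots,\pi_{\tau^*}$ one at a time --- is a viable route, and it is structurally different from the paper's proof, which expands $C^f_i(\vec{h}_0)$ in one shot as $\sum_{\vec{h}_{i,\tau^*}} \mathbb{E}[T_i \mid \vec{h}_{i,\tau^*}, f]\,\Pr\{\vec{h}_{i,\tau^*}\}$ and forces all conditional terms to equal the unconditional value. However, there is a genuine gap at the heart of your inductive step, namely where you assert that under $(f_{-i}, g^{(k-1)})$ the continuation cost of the pending player at step $k$ \emph{is} $C^f_i(\pi_1,\ldots,\pi_{k-1})$, and under $(f_{-i},g^{(k)})$ it \emph{is} $u_{\pi_k}$. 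Your indifference principle defines $C^f_i(\pi_1,\ldots,\pi_{k-1})$ and $u_{\pi_k}$ under the profile in which player $i$ generated her prefix \emph{randomly by following} $f$; in the induction the prefix is generated \emph{deterministically} by $g^{(k-1)}$ or $g^{(k)}$. These are conditional expectations on different probability spaces: they agree only if the conditional law of the opponents' states, given that player $i$ is still pending with personal history $(\pi_1,\ldots,\pi_{k-1})$, is the same whether she sampled that history from $f$ or hard-coded it. Without establishing this, the indifference principle (proved under the all-$f$ law) cannot be used to equate the two continuation costs, and the induction does not close.

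This identification is exactly where the acknowledgment-based hypothesis must do its work, and it is not automatic. The paper's proof devotes its key step to it: it first adopts the convention that players keep flipping coins even after a success, so that the realization of a player's personal history is a function of her own randomness alone, hence \emph{independent} of the opponents' transmissions; conditioning on the history is then equivalent to hard-coding it. The paper even flags in a footnote that this is precisely the observation that fails for richer feedback (e.g., the ternary feedback of Fiat et al.), for which the lemma is false --- so no correct proof can gloss over it. Your ``main obstacle'' paragraph gestures at a check but misidentifies it: the issue is not whether fixing the action at step $k$ disturbs the pre-$k$ evolution (it trivially does not), but whether conditioning on the personal history under $f$ biases the posterior over the opponents' states relative to the deterministic-prefix profile. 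Two repairs are available: supply the independence/coupling argument above, or rerun your support argument entirely under the profile $(f_{-i},g^{(k-1)})$, using the induction hypothesis that $g^{(k-1)}$ attains the equilibrium value (hence is a best response to $f_{-i}$), so that conditional optimality of its continuation at the positive-probability information set ``pending with prefix $(\pi_1,\ldots,\pi_{k-1})$'' forces every support action's continuation, in particular $\pi_k$ followed by $f$, to be optimal as well. With either patch your argument becomes a correct alternative to the paper's.
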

\proof Since we consider acknowledgment-based protocols, for the sake
of the analysis, we will assume that players continue to flip coins
even after successfully transmitting, so that they eventually find out
what their decisions would have been at any time $t$.\footnote{In
  fact, we only need this assumption to hold for any $t$ which is at
  most some predefined fixed upper bound $\tau^*$.}

For a fixed player $i$, we obtain
\begin{equation} \label{eq-expectation}
C^f_i(\vec{h}_0) = \mathbb{E}[T_i| \vec{h}_{i, 0}, f] = \sum_{\vec{h}_{i, \tau^*}} \mathbb{E}[T_i| \vec{h}_{i, \tau^*}, f] \Pr\left\{\textrm{$\vec{h}_{i, \tau^*}$ happens for $i$} \right\}.
\end{equation}
Notice now that, since $f$ is acknowledgment-based, the event
$\left\{\textrm{$\vec{h}_{i, \tau^*}$ happens for $i$} \right\}$ is
independent of the transmission sequences of other players. Therefore,
$\mathbb{E}[T_i| \vec{h}_{i, \tau^*}, f]$ is equal to the
unconditional (i.e. conditioned on $\vec{h}_{i, 0}$) expected latency
of player $i$ when she uses the protocol defined in equation
(\ref{eq-deviator}), where the first $\tau^*$ terms of $\pi$ are
replaced by $(\pi_1, \ldots, \pi_{\tau^*}) = (\vec{X}_{i, 1}, \ldots,
\vec{X}_{i, \tau^*}) = \vec{h}_{i,
  \tau^*}$.\footnote{Note that this observation is not true for
  general protocols and different kinds of feedback, which is why the
  present analysis cannot be used to prove an impossibility result in
  the case of protocols like those in \cite{Fiat07}.} In particular,
we have that $\mathbb{E}[T_i| \vec{h}_{i, \tau^*}, f] =
\mathbb{E}[T_i| \vec{h}_{i, 0}, (f_{-i}, g)] = C^{(f_{-i},
  g)}_i(\vec{h}_0)$.% where $(f_{-i}, g)$ denotes the family of
% protocols where player $i$ uses protocol $g$ and all other players use
% protocol $f$.

Assume now for the sake of contradiction that there is a transmission
history $\vec{h}_{i, \tau^*}$ for player $i$ such that
$\mathbb{E}[T_i| \vec{h}_{i, \tau^*}, f] \neq \mathbb{E}[T_i|
\vec{h}_{i, 0}, f]$. Clearly, if $\mathbb{E}[T_i| \vec{h}_{i, \tau^*},
f] < \mathbb{E}[T_i| \vec{h}_{i, 0}, f]$, then the protocol $g$ is a
better protocol for player $i$, which contradicts the fact that $f$ is
in equilibrium. On the other hand, if $\mathbb{E}[T_i| \vec{h}_{i,
  \tau^*}, f] > \mathbb{E}[T_i| \vec{h}_{i, 0}, f]$, then equation
(\ref{eq-expectation}) implies that there must be another transmission
history $\vec{h}'_{i, \tau^*}$ for which $\mathbb{E}[T_i| \vec{h}'_{i,
  \tau^*}, f] < \mathbb{E}[T_i| \vec{h}_{i, 0}, f]$. 

Therefore, we have that $C^{(f_{-i}, g)}_i(\vec{h}_0) =
\mathbb{E}[T_i| \vec{h}_{i, 0}, (f_{-i}, g)] = \mathbb{E}[T_i|
\vec{h}_{i, \tau^*}, f] = \mathbb{E}[T_i| \vec{h}_{i, 0}, f] =
C^f_i(\vec{h}_0)$, for any transmission history $\vec{h}_{i, \tau^*}$,
and for any finite $\tau^* \geq 1$, thus also for any 0-1 sequence
$\pi$ that is consistent with $f$. \qed

\medskip

The next corollary is an interesting consequence of
Lemma~\ref{lemma:bestresponses} regarding {\em non-blocking} anonymous
age-based protocols.%  with finite expected latency cannot be in
% equilibrium.
The full proof can be found in Appendix \ref{appendix:cor:nonblocking}.

\begin{corollary} \label{cor:nonblocking} 
Let $f \stackrel{def}{=}
  \{f_t\}_{t \geq 1}$ be a non-blocking anonymous age-based
  protocol. If the expected latency of a player using protocol $f$ is
  finite, i.e. $\mathbb{E}[T_i| \vec{h}_{i, 0}, f] < \infty$, then $f$
  is not in equilibrium.
\end{corollary}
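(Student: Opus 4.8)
The plan is to use Lemma~\ref{lemma:bestresponses} to show that any finite-latency non-blocking age-based protocol leads to a contradiction by exhibiting a profitable deviation. The key observation is that an age-based protocol is fully described by a sequence of transmission probabilities $p_1, p_2, \ldots$ (where $p_t = \Pr(X_{i,t}=1 \mid \vec{h}_{i,t-1})$ depends only on $t$ and not on the actual history), and non-blocking means $p_t < 1$ for all $t$. Since $f$ is anonymous and age-based, the indicator sequence $\pi$ that a player actually realizes is random, but crucially any $0$-$1$ sequence with nonzero probability is consistent with $f$, so Lemma~\ref{lemma:bestresponses} applies to all of them.

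First I would write down what Lemma~\ref{lemma:bestresponses} gives us concretely: for every finite prefix length $\tau^*$ and every consistent $0$-$1$ sequence $\pi$, the player is indifferent between following $f$ from the start and first playing the deterministic prefix $\pi_1, \ldots, \pi_{\tau^*}$ and then reverting to $f$. In particular, the all-zeros prefix of any finite length is consistent (since each $p_t < 1$, remaining quiet always has positive probability), so deferring all transmission attempts for the first $\tau^*$ rounds and then playing $f$ yields exactly the same expected latency $C^f_i(\vec{h}_0) < \infty$. Call this common value $L$.

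The heart of the argument is then a contradiction driven by the cost of idling. Let $g$ be the deviation that stays quiet for the first $\tau^*$ rounds and then follows $f$. Under $g$, player $i$ cannot possibly succeed before round $\tau^* + 1$, so its success time $T_i$ satisfies $T_i \geq \tau^* + 1$ deterministically, whence $C^{(f_{-i}, g)}_i(\vec{h}_0) = \mathbb{E}[T_i \mid \vec{h}_{i,0}, (f_{-i}, g)] \geq \tau^* + 1$. But Lemma~\ref{lemma:bestresponses} forces $C^{(f_{-i}, g)}_i(\vec{h}_0) = L$, so we obtain $L \geq \tau^* + 1$ for every finite $\tau^*$. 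Letting $\tau^* \to \infty$ contradicts the finiteness of $L$. Hence no such equilibrium protocol exists.

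The main obstacle to watch is making sure the all-quiet prefix is genuinely consistent with $f$ and that Lemma~\ref{lemma:bestresponses} is invoked with the correct player and conditioning. The non-blocking hypothesis is exactly what guarantees consistency of the all-zeros sequence: if some $p_t = 1$ were allowed, staying quiet at step $t$ would have probability zero and the deviation would not be a legitimate restriction of an equilibrium-consistent history. I would state this consistency step explicitly, then note that the lower bound $T_i \geq \tau^* + 1$ is immediate because no transmission attempt occurs before round $\tau^* + 1$ under $g$, and finally let $\tau^*$ exceed $L$ to close the contradiction. The only subtlety is that Lemma~\ref{lemma:bestresponses} assumes $f$ is in equilibrium; we assume this for contradiction, derive $L = C^{(f_{-i},g)}_i(\vec{h}_0) \geq \tau^*+1$ for arbitrarily large $\tau^*$, and conclude $f$ cannot have been in equilibrium while having finite latency.
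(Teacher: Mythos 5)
Your proof is correct and follows essentially the same route as the paper: assume equilibrium, use the non-blocking hypothesis to certify that the all-zeros prefix is consistent with $f$, invoke Lemma~\ref{lemma:bestresponses} to equate the latency of the all-quiet deviation $g(\tau^*)$ with $C^f_i(\vec{h}_0)$, and contradict finiteness via the deterministic bound $T_i \geq \tau^*+1$. The only cosmetic difference is that the paper fixes $\tau^* = \lfloor \mathbb{E}[T_i|\vec{h}_{i,0},f] \rfloor$ for an immediate contradiction, whereas you let $\tau^*$ grow unboundedly; these are equivalent.
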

%\proof Assume for the sake of contradiction that $f$ is in equilibrium
%and let $\tau^* \stackrel{def}{=} \left\lceil \mathbb{E}[T_i|
%  \vec{h}_{i, 0}, f] \right\rceil$ be finite, where $i$ is a fixed
%player using $f$. Consider the protocol $g = g(\tau^*)$ as defined in
%(\ref{eq-deviator}), where the first $\tau^*$ terms of $\pi$ are set
%equal to 0. Clearly, any player using $g$ has expected latency
%strictly larger than $\tau^*$, irrespectively of the transmissions of
%the other players. Notice also that $\pi$ is consistent with $f$ up to
%$\tau^*$, since $\Pr\{\vec{h}_{i, \tau^*} = (0, \ldots, 0) | f\} =
%\prod_{t=1}^{\tau^*} (1-p_{i, t}) > 0$. Therefore, by Lemma
%\ref{lemma:bestresponses} we have that $\tau^* \geq \mathbb{E}[T_i|
%\vec{h}_{i, 0}, f] = \mathbb{E}[T_{i}|\vec{h}_{i, 0}, (f_{-i}, g)] >
%\tau^*$, which is a contradiction. We conclude that either $f$ is not
%in equilibrium, or $\tau^*$ is $\infty$. \qed

We are now ready to show the main result of this section.% namely that
% anonymous age-based protocols with finite expected latency cannot be
% in equilibrium.

\begin{theorem} \label{lemma:nooblivious} There is no anonymous
  age-based protocol $f$ for $n \geq 2$ players that is in equilibrium
  and has $\mathbb{E}[T_i| \vec{h}_{i, 0}, f] < \infty$, for any
  player $i$.
\end{theorem}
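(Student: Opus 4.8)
The plan is to reduce everything to the single principle behind Lemma~\ref{lemma:bestresponses}: in an equilibrium, every transmission history consistent with $f$ must yield the same expected latency. So fix $n\ge 2$, assume $f$ is age-based, in equilibrium, and has finite expected latency $\alpha\stackrel{def}{=}C^f_i(\vec{h}_0)<\infty$, and let $p_t$ be the (common, by anonymity) transmission probability of a pending player at slot $t$. For each $k\ge 1$ consider the deviation $\sigma_k$ in which player $i$ stays quiet in slots $1,\dots,k-1$, transmits in slot $k$, and reverts to $f$ afterwards. Two facts drive the argument. First, $\sigma_k$ cannot succeed before slot $k$, so $C^{(f_{-i},\sigma_k)}_i(\vec{h}_0)\ge k$. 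Second, $\sigma_k$ is exactly a deviation of the form~(\ref{eq-deviator}) built from the $0$-$1$ sequence $(0^{k-1},1)$ with $\tau^*=k$; this prefix has probability $\prod_{t<k}(1-p_t)\cdot p_k$, so it is consistent with $f$ precisely when $p_1,\dots,p_{k-1}<1$ and $p_k>0$, and in that case Lemma~\ref{lemma:bestresponses} forces $C^{(f_{-i},\sigma_k)}_i(\vec{h}_0)=\alpha$.

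I would then split on whether $f$ is blocking. If $f$ is non-blocking, then every $\sigma_k$ with $p_k>0$ is consistent, whence $\alpha\ge k$ for arbitrarily large $k$ (finiteness of $\alpha$ already forces $p_k>0$ infinitely often, since otherwise a player that is quiet throughout remains pending forever). Thus $\alpha=\infty$, a contradiction. This is exactly Corollary~\ref{cor:nonblocking}, so the non-blocking case is already in hand and only the blocking case remains.

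The real work, and what I expect to be the main obstacle, is the blocking case. Let $t_0\stackrel{def}{=}\inf\{t:p_t=1\}<\infty$ be the first deadline; now the delaying trick is unavailable past $t_0$ and only yields $\alpha\ge t_0$. Condition on the positive-probability event that player $i$ is quiet through slot $t_0-1$, so that she is pending at $t_0$, and let $R$ be the number of \emph{other} players pending at that slot; since $p_{t_0}=1$, all of them are forced to transmit at $t_0$. Under $f$ player $i$ transmits and collides exactly when $R\ge 1$, so the natural improvement is to wait through the deadline: a coupling shows this strictly helps whenever $R\ge 1$ (if $R=1$ the lone competitor clears for free and leaves $i$ alone), but it costs exactly one slot on the event $R=0$. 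The obstacle is that, once $t_0$ is large, $R=0$ has positive probability, so waiting does not dominate. I would therefore argue by a threshold, exactly as the two-player uniqueness analysis of Section~\ref{sec:uniqueness} splits at $p=\frac{2}{3}$: when deadline collisions are sufficiently likely the waiting deviation beats $f$, and when they are unlikely the persistent deviation (transmit in every slot, with latency $\frac{1}{1-p}$ in the two-player case) beats $f$. The hardest step is controlling the multi-player continuation following a deadline collision well enough to show that, for every admissible choice of $p_1,p_2,\dots$, at least one of these two deviations is strictly profitable, contradicting the equilibrium assumption.
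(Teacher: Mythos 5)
Your non-blocking case is fine---it is exactly Corollary~\ref{cor:nonblocking}, which the paper also proves and uses---but the blocking case, which is the entire content of the theorem beyond that corollary, is not proved: you explicitly leave the ``hardest step'' open, so this is a plan rather than a proof, and the plan as sketched breaks at the very place you anchor it, namely the first deadline $t_0=\inf\{t:p_t=1\}$. Conditioned on player $i$ being quiet through slot $t_0-1$, at most one other player can clear per slot and all slots before $t_0$ are non-blocking, so $R\ge (n-1)-(t_0-1)$; if $t_0\le n-2$ this forces $R\ge 2$ deterministically. In that regime your waiting deviation is \emph{exactly neutral}, not profitable: with two or more others forced to transmit at $t_0$, the collision happens whether or not $i$ joins it, the pending set at $t_0+1$ is unchanged, and since $f$ is age-based the continuation law is identical. (Relatedly, your claim that waiting ``strictly helps whenever $R\ge 1$'' is wrong: for $R\ge 2$ it is only neutral; the strict gain is confined to $R=1$, so one must prove $\Pr[R=1]>0$, which is impossible at an early deadline.) The persistent deviation gives no contradiction there either---against $n-1$ players following an arbitrary age-based tail its latency is nothing like $\frac{1}{1-p}$ and you have no handle on it. So the dichotomy ``waiting or persistence is strictly profitable'' is simply false at $t_0$ in general, and no threshold argument at the first deadline can close the proof.

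The missing idea is precisely the paper's Claim~\ref{claim:agebased}: one must work not at the first deadline but at a deadline $\tau^*$ with $f_{\tau^*}=1$, $f_{\tau^*-1}<1$, and at least $n-1$ non-blocking slots before $\tau^*$---this is exactly what makes ``only two players pending at the deadline'' a positive-probability event. Establishing that such a $\tau^*$ exists is itself nontrivial (the paper's proof of the claim needs Corollary~\ref{cor:nonblocking}, an infinite-latency argument, and a further deviation built on Lemma~\ref{lemma:bestresponses}). The paper then also handles your $R=0$ trade-off structurally rather than by a threshold: instead of comparing ``wait at the deadline'' against $f$, it compares two deviations $Q$ and $Q'$ sharing a deterministic $f$-consistent prefix in which the deviator stays quiet at every non-blocking slot up to $\tau^*-2$ (so she never clears early and is surely pending at the deadline) and transmits at blocking slots and at $\tau^*-1$ (blocking clearances by others there); Lemma~\ref{lemma:bestresponses} pins $\mathbb{E}[T_i\mid \vec{h}_{i,0},(f_{-i},Q)]$ to $\mathbb{E}[T_i\mid \vec{h}_{i,0},f]$, and $Q,Q'$ differ only at $\tau^*$, where skipping the deadline is neutral when at least three players are pending and strictly profitable when exactly two are. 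Your instinct that the small-$R$ branch is the crux is correct---the paper's prefix is engineered to suppress it---but without an analogue of Claim~\ref{claim:agebased} and of this $Q$ versus $Q'$ comparison, your argument does not go through.
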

\proof For the sake of contradiction, let's assume that $f =
\{f_t\}_{t \geq 1}$ is an age-based protocol in equilibrium with
finite expected latency, i.e. $\mathbb{E}[T_i| \vec{h}_0, f] <
\infty$. % The next claim asserts the existence of a finite positive
% integer $\tau^*$ where the protocol dictates transmission, and . 
The next claim asserts the existence of a finite positive integer
$\tau^*$ where the protocol dictates transmission, with certain
properties, which will be a useful ingredient for the rest of the proof. The detailed proof of Claim \ref{claim:agebased} can be found in Appendix \ref{appendix:claim:agebased}.

\begin{claim} \label{claim:agebased} Let $f$ be an anonymous age-based
  protocol for $n$ players that is in equilibrium and has
  $\mathbb{E}[T_i| \vec{h}_{i, 0}, f] < \infty$, then there is a
  finite positive integer $\tau^*$ such that

\begin{description}

\item[(a)] $f_{\tau^*} = 1$, 

\item[(b)] $f_{\tau^*-1} < 1$ and 

\item[(c)] there exist $\tau_1 < \cdots < \tau_{n-1} < \tau^*$, such that $f_{\tau_j} < 1$, for all $j = 1, \ldots, n-1$.

\end{description}
\end{claim}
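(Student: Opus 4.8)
The plan is to locate $\tau^*$ as a \emph{blocking} time (one with $f_{\tau^*}=1$) that is preceded by sufficiently many \emph{non-blocking} times (times $t$ with $f_t<1$), and to obtain property (b) essentially for free by choosing $\tau^*$ minimally.

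First I would establish that there are at least $n-1$ non-blocking times. Since $\mathbb{E}[T_i \mid \vec h_{i,0}, f] < \infty$, the game finishes almost surely, so in a.s.\ every realization there occur $n-1$ successes that reduce the number of pending players from $n$ down to $1$. Each such success happens at a time when at least two players are pending, and at such a time a blocking step would force all pending players to transmit and collide; hence each of these $n-1$ successes occurs at a distinct non-blocking time, so $|\{t : f_t<1\}|\ge n-1$. Moreover, by Corollary~\ref{cor:nonblocking} the protocol cannot be everywhere non-blocking, so at least one blocking time exists.

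The heart of the argument is to show that some blocking time has at least $n-1$ non-blocking times before it. Suppose not. Letting $T_0$ be the $(n-1)$-th non-blocking time, every blocking time must then lie in $\{1,\dots,T_0\}$, so the tail is non-blocking: $f_t<1$ for all $t>T_0$. I would now contradict this using Lemma~\ref{lemma:bestresponses}. Consider the fixed $0$--$1$ sequence $\pi^*$ that transmits exactly at the blocking times and stays quiet at the non-blocking times; $\pi^*$ is consistent with $f$. For any $\tau^*>T_0$, let $g=g(\tau^*)$ follow $\pi^*$ for the first $\tau^*$ steps and then $f$, as in Lemma~\ref{lemma:bestresponses}; being quiet throughout $(T_0,\tau^*]$ is consistent because those times are non-blocking. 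Let $G$ be the event that every player other than Alice is quiet at all non-blocking times $\le T_0$ (and, as forced, transmits at the blocking times $\le T_0$); since there are finitely many such times and each factor $1-f_t$ is positive, $\Pr[G]=\big(\prod_{t\le T_0,\,f_t<1}(1-f_t)\big)^{\,n-1}>0$, a constant independent of $\tau^*$. On $G$, at every non-blocking time $\le T_0$ the channel is idle and at every blocking time $\le T_0$ all $n$ players (Alice included, via $\pi^*$) collide, so no player succeeds up to $T_0$; and Alice, following $g$, stays quiet on $(T_0,\tau^*]$, hence is still pending after step $\tau^*$. Therefore $C^{(f_{-i},g)}_i(\vec h_0)\ge \tau^*\cdot\Pr[G]\to\infty$ as $\tau^*\to\infty$, contradicting $C^{(f_{-i},g)}_i(\vec h_0)=C^f_i(\vec h_0)<\infty$ from Lemma~\ref{lemma:bestresponses}. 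This rules out the non-blocking tail and proves the existence of a blocking time with at least $n-1$ non-blocking predecessors.

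Finally I would take $\tau^*$ to be the \emph{smallest} blocking time with at least $n-1$ non-blocking times before it. Property (a) and property (c) are then immediate (choose any $n-1$ of those non-blocking predecessors as $\tau_1<\cdots<\tau_{n-1}$). For property (b), if $f_{\tau^*-1}=1$ then $\tau^*-1$ would be a strictly smaller blocking time with exactly the same non-blocking predecessors as $\tau^*$ (since $\tau^*-1$ itself is blocking, not non-blocking), contradicting minimality; hence $f_{\tau^*-1}<1$. I expect the main obstacle to be the third paragraph: the $n\ge 3$ bookkeeping in defining the positive-probability event $G$, and the verification that the quiet-tail deviation $g(\tau^*)$ is genuinely consistent with $f$ so that Lemma~\ref{lemma:bestresponses} applies, mirroring the role of the $(001z)$ deviation in the two-player uniqueness proof.
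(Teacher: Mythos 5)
Your proof is correct, and it rests on the same two pillars as the paper's own proof: Corollary~\ref{cor:nonblocking} (to guarantee a blocking time exists) and Lemma~\ref{lemma:bestresponses} applied to the $0$--$1$ sequence that transmits exactly at the blocking times. What differs is the decomposition and the final bound. The paper sets $Z^f_t=\sum_{t'\le t}(1-\lfloor f_{t'}\rfloor)$, assumes $\inf\{t: f_t=1,\ Z^f_t\ge n-1\}=\infty$, and splits into three cases by tail behaviour: no blocking time at all (killed by Corollary~\ref{cor:nonblocking}); an all-blocking tail with at most $n-2$ non-blocking steps (then at least two players collide forever, so latency is infinite); and a non-blocking tail after a last blocking time (then the blocking-pattern deviation up to the \emph{fixed} horizon $\mathbb{E}[T_i|\vec{h}_{i,0},f]$ surely collides at every transmission it makes, contradicting Lemma~\ref{lemma:bestresponses}). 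Your step~1 is essentially the contrapositive of the paper's second case --- finite latency forces at least $n-1$ non-blocking times, because each of the first $n-1$ successes needs one --- and it is arguably cleaner, since it lets you collapse the remaining cases into a single contradiction. In that contradiction you use a probabilistic device (the event $G$, whose probability is independent of $\tau^*$, with $\tau^*\to\infty$) where the paper argues deterministically with a fixed horizon; note the deterministic route is available to you as well: under your supposition every blocking time has at most $n-2$ non-blocking predecessors, so at every blocking time at least one other player is still pending with certainty and the deviator always collides, which makes $G$ unnecessary (though harmless). Finally, you make explicit the minimality argument that yields property (b), which the paper leaves implicit in its infimum definition.
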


Take a $\tau^*$ as described in the above claim and consider the
protocol $Q$ defined as follows
\begin{equation} 
Q \stackrel{def}{=} \left\{
\begin{array}{ll}
	0, & \quad \textrm{if $f_t < 1$, for $1 \leq t \leq \tau^*-2$} \\
	1, & \quad \textrm{if $f_t = 1$, for $1 \leq t \leq \tau^*-2$} \\
	1, & \quad \textrm{for $t = \tau^*-1$ and $t = \tau^*$} \\
	f_t, & \quad \textrm{for $t > \tau^*$.}
\end{array}
 \right.
\end{equation}
Notice that, since the initial (deterministic) sequence of transmissions of $Q$ is consistent with $f$, by Lemma
\ref{lemma:bestresponses} we have that $\mathbb{E}[T_i| \vec{h}_{i,
  0}, f] = \mathbb{E}[T_{i}|\vec{h}_{i, 0}, (f_{-i}, Q)]$.% , i.e. if $f$
% is in equilibrium, $Q$ is a best response for $i$ when every other
% player uses protocol $f$.

Now consider the protocol $Q'$, which is the same as $Q$, with the
only difference\footnote{Note that $Q'$ does not agree with $f$
  whenever $f_t = 1$, so Lemma \ref{lemma:bestresponses} does not
  apply to $Q'$.} that $Q'_{\tau^*} = 0$. In fact, we show that,
$\mathbb{E}[T_{i}|\vec{h}_{i, 0}, (f_{-i}, Q')] <
\mathbb{E}[T_{i}|\vec{h}_{i, 0}, (f_{-i}, Q)]$ which implies
$\mathbb{E}[T_{i}|\vec{h}_{i, 0}, (f_{-i}, Q')] < \mathbb{E}[T_i|
\vec{h}_{i, 0}, f]$, which contradicts the assumption that
$f$ is in equilibrium.

% In fact, we show that,
% $\mathbb{E}[T_{i}|\vec{h}_{i, 0}, (f_{-i}, Q')] <
% \mathbb{E}[T_{i}|\vec{h}_{i, 0}, (f_{-i}, Q)]$, i.e. when all other
% players use protocol $f$, player $i$ prefers using $Q'$ to $Q$. In
% particular, this means that $\mathbb{E}[T_{i}|\vec{h}_{i, 0}, (f_{-i},
% Q')] < \mathbb{E}[T_i| \vec{h}_{i, 0}, f]$, which is a contradiction
% on the assumption that $f$ is in equilibrium.

Notice now that protocols $Q$ and $Q'$ are identical for any $ t \neq
\tau^*$, and if there are at least 3 pending players at $\tau^*$
(i.e. Alice and at least two others), then there would be a collision
at $\tau^*$ no matter which of the two protocols Alice uses (i.e. the
same players that were pending at $\tau^*$ would be pending at the
start of time slot $\tau^*+1$ as well). Therefore, the two protocols
behave the same in this case. However, if there are exactly 2 pending
players at $\tau^*$ (i.e. Alice and exactly one more, say Bob) the two
protocols behave differently. Indeed, if Alice uses protocol $Q$, then
there will be a collision at $\tau^*$, leaving exactly 2 pending
players at $\tau^*+1$. However, if Alice uses protocol $Q'$, then Bob
will be able to successfully transmit at $\tau^*$, leaving Alice the
only pending player at time $\tau^*+1$, which implies a strictly
smaller expected latency. The proof is completed by noting that, by
definition of $\tau^*$, the probability that there will be exactly 2
players pending at $\tau^*$ is strictly positive (since there are at
least $n-2$ steps before $\tau^*-1$ with transmission probability strictly less than 1). \qed

\medskip

% Surprisingly, the proof that there are no anonymous backoff protocols
% in equilibrium is much shorter. In fact, it is similar to the proof of
% Corollary \ref{cor:nonblocking}.

Now we conclude with the impossibility result for backoff protocols, the proof of
which shares similarities to the proof of Corollary~\ref{cor:nonblocking}.

\begin{theorem} \label{lemma:nobackoff} There is no anonymous backoff
  protocol $f$ in equilibrium for $n \geq 2$ players with
  $\mathbb{E}[T_i| \vec{h}_{i, 0}, f] < \infty$, for any player $i$.
\end{theorem}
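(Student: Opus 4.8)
The plan is to put a backoff protocol into a form on which the tools already developed---Lemma~\ref{lemma:bestresponses}, Corollary~\ref{cor:nonblocking} and Theorem~\ref{lemma:nooblivious}---can act. The key reformulation is that a \emph{pending} player has, by definition, never transmitted successfully, so each of her past transmissions was a collision; hence for a pending player the number of unsuccessful transmissions equals the number of ones in her personal history. Thus an anonymous backoff protocol is described by a single sequence $q_0,q_1,q_2,\ldots$, where $q_k$ is the transmission probability after the player has transmitted (and collided) exactly $k$ times, and along any personal history the count is a \emph{deterministic} function of the player's own transmissions. I would first extract two consequences of $\mathbb{E}[T_i|\vec{h}_{i,0},f]<\infty$: that $q_k>0$ for every count $k$ reachable while pending (otherwise a player who reaches count $k$ never transmits again and never finishes), and that not all $q_k$ equal $1$ (otherwise every pending player transmits at every step, and with $n\ge 2$ this is a perpetual collision). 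Since backoff protocols are acknowledgment-based, Lemma~\ref{lemma:bestresponses} applies, and because the count-trajectory of any consistent $0$--$1$ sequence is deterministic, that lemma pins the expected latency of every such sequence to $\mathbb{E}[T_i|\vec{h}_{i,0},f]$.

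Next I would let $k_0$ be the smallest count with $q_{k_0}<1$, which exists by the second consequence above. For all $k<k_0$ we have $q_k=1$, so during the opening phase every pending player is \emph{forced} to transmit; starting at count $0$ all $n$ players collide in lockstep and reach count $k_0$ simultaneously, at the fixed time $k_0+1$, all still pending. This synchronized configuration is symmetric and non-blocking \emph{at that step} ($0<q_{k_0}<1$), and it plays exactly the role that the initial slot $t=1$ plays in the earlier proofs. From here I would split into two cases according to whether forcing reappears: Case~I, where $q_k<1$ for all $k\ge k_0$, and Case~II, where some $k>k_0$ has $q_k=1$.

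In Case~I the protocol is non-blocking from the synchronized state onward, and the argument of Corollary~\ref{cor:nonblocking} transfers: I compare the two consistent continuations ``transmit now'' and ``defer'' at the synchronized choice point (both consistent because $0<q_{k_0}<1$), which by Lemma~\ref{lemma:bestresponses} have equal expected latency, and then exhibit the same profitable deviation used there---persistent transmission or a deferred transmission---exploiting that with positive probability the other players finish first and leave the deviator as the sole pending player, whom $f$ still refuses to let transmit with probability $1$. In Case~II I would take the smallest forced count $k^*>k_0$ and replicate the swap of Theorem~\ref{lemma:nooblivious}: force the deviator to transmit through counts $0,\ldots,k^*-1$ (consistent with $f$ since each $q_k>0$) and compare the protocol $Q$ that transmits at count $k^*$ with the protocol $Q'$ that stays quiet there. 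Forcing the transmission at count $k^*-1$ ensures, as in the proof of Theorem~\ref{lemma:nooblivious}, that whenever the deviator is still pending upon reaching count $k^*$ at least one opponent is pending. Note a genuine simplification relative to Claim~\ref{claim:agebased}: where the age-based proof needs $n-1$ \emph{distinct} free times, here a single non-forced count $k_0$ suffices, because a player who stays quiet remains at count $k_0$, so opponents can be shed one at a time by revisiting the same count.

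The step I expect to be the main obstacle is precisely the blocking Case~II. Unlike the age-based setting, where $f_{\tau^*}=1$ is tied to a fixed \emph{time} and therefore forces \emph{every} pending player to transmit at $\tau^*$, in a backoff protocol the forcing is tied to the \emph{count} $k^*$, and distinct pending players generally sit at distinct counts at any given moment. Hence, at the time the deviator reaches count $k^*$, the surviving opponent need not itself be at a forced count and may stay quiet, in which case staying quiet \emph{forfeits} an immediate solo success rather than gaining one---so the swap does not improve latency history-by-history as it does in Theorem~\ref{lemma:nooblivious}. Controlling this desynchronization is the technical heart of the proof: I would engineer a positive-probability history that reduces the game to exactly two players who arrive at count $k^*$ \emph{together} (the other $n-2$ depart one at a time, each transmitting alone while everyone else stays quiet at count $k_0$, possible since $q_{k_0}<1$, and the remaining two then collide in lockstep from $k_0$ up to $k^*$), so that on this event the opponent is forced to transmit at $k^*$ and staying quiet makes the deviator the sole survivor with a strictly smaller expected latency. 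The remaining work is an accounting argument---either showing the ``stay quiet'' deviation is never strictly worse on the non-synchronized histories, or a direct sign computation of the net change in expected latency---to conclude that the strict gain on the synchronized event is not cancelled, contradicting the equilibrium property of $f$.
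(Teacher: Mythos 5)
Your reformulation (for a pending player, the backoff count equals the number of ones in her personal history), your synchronization of the forced prefix $q_0=\dots=q_{k_0-1}=1$, and your Case~I are all sound; indeed Case~I is essentially the paper's proof, which handles the forced prefix by a without-loss-of-generality shift (if $q_{k'}=1$ for all $k'<s$ and $q_s\neq 1$, the shifted protocol $q'_k=q_{k+s}$ is again an equilibrium with finite latency) and then runs the deferral argument of Corollary~\ref{cor:nonblocking}: defer for $\tau^*=\lfloor\mathbb{E}[T_i|\vec{h}_{i,0},f]\rfloor$ steps, note that this deviation has latency at least $\tau^*+1$, and contradict Lemma~\ref{lemma:bestresponses}. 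The genuine gap is your Case~II, which you leave unfinished (``the remaining work is an accounting argument''), and the swap you sketch there really does hit the desynchronization problem you identify: in Theorem~\ref{lemma:nooblivious} the strict improvement of $Q'$ over $Q$ holds history-by-history only because the forcing is tied to a common \emph{time}, whereas for backoff protocols it is tied to each player's private \emph{count}, so showing that the strict gain on your synchronized event is not cancelled by losses elsewhere is not a routine bookkeeping step.

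The deeper point, however, is that Case~II should never have existed. A blocked count $k^*>k_0$ concerns only the deviator's own count, and the deferral deviation of Case~I never advances that count: a player who stays quiet remains at count $k_0$ forever --- a fact you yourself state, but use only to shed opponents. Consequently the all-zero continuation is consistent with $f$ with probability $(1-q_{k_0})^{\tau^*}>0$ \emph{regardless} of whether some later $q_k$ equals $1$; those probabilities are never sampled along this history, so they cannot obstruct the application of Lemma~\ref{lemma:bestresponses}. The single argument of your Case~I therefore covers every backoff protocol with no case split at all, and that is exactly the paper's proof: after the shift it observes $\Pr\{\vec{h}_{i,\tau^*}=(0,\dots,0)\,|\,f\}=(1-q_0)^{\tau^*}>0$ and concludes $\tau^*+1>\mathbb{E}[T_i|\vec{h}_{i,0},f]=\mathbb{E}[T_i|\vec{h}_{i,0},(f_{-i},g)]\geq\tau^*+1$, a contradiction. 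Deleting your Case~II (and the now-superfluous preliminary that $q_k>0$ for all reachable $k$, which is needed only to set up that case) and letting Case~I stand alone turns your proposal into a correct proof.
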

\proof %First observe that, for any player $i$ and time $t$, $\Pr(X_{i, t} = 1 | \vec{h}_{i, t-1}= (0, \ldots, 0), f) \stackrel{def}{=} p_{i, 0} > 0$, % i.e. the probability that a player % transmits when she has no other attempted transmissions must be non-zero, otherwise no player will ever attempt a transmission.

Assume for the sake of contradiction that $f$ is in equilibrium and
let $\tau^* \stackrel{def}{=} \left\lfloor \mathbb{E}[T_i| \vec{h}_{i, 0}, f] \right\rfloor$ be finite, where $i$ is a fixed player using
$f$. By definition, we have that $f_i = \{p_{i, k}\}_{k \geq 0}$, where $p_{i, k}$ denotes the transmission probability of player $i$ after $k$ unsuccessful transmissions. Notice also that we may assume without loss of generality that $p_{i, 0} \neq 1$. Indeed, suppose there is finite integer $s>0$, such that $p_{i, k'} = 1$, for all $k'< s$ and $p_{i, s} \neq 1$ (if $s$ is not finite, then clearly $f$ does not have finite expected latency). Then the protocol $f' = \{p'_{i, k}\}_{k \geq 0}$, with $p'_{i, k} = p_{i, k+s}$, for all $k \geq 0$ is also an equilibrium.

Consider now the protocol $g = g(\tau^*)$ defined in equation
(\ref{eq-deviator}), where the first $\tau^*$ terms of $\pi$ are set
to 0. Clearly, any player using $g$ has expected latency at least $\tau^*+1$. % , no matter which protocol the other players use. 
Notice also that $\pi$ is consistent with $f$ up to $\tau^*$,
since $\Pr\{\vec{h}_{i, \tau^*} = (0, \ldots, 0) | f\} = (1-p_{i,
  0})^{\tau^*} > 0$. Therefore, by Lemma \ref{lemma:bestresponses} we
have that $\tau^*+1 > \mathbb{E}[T_i| \vec{h}_{i, 0}, f] =
\mathbb{E}[T_{i}|\vec{h}_{i, 0}, (f_{-i}, g)] \geq \tau^*+1$, which is a
contradiction. But this implies that, either $f$ is not in
equilibrium, or $\tau^*$ is $\infty$. \qed

%%% Local Variables: 
%%% mode: latex
%%% TeX-master: "ICALP_version_15022016"
%%% End: 

%%%%%%%%%\\input{efficientprotocol.tex}

\section{An efficient protocol in equilibrium}
\label{sec:efficientprotocol}

In this section we present a deadline protocol for $n$ players that is efficient, i.e. with high probability the latency of any player is $\Theta(n)$. Let $t_0 = t_0(n)$ be an integer, to be determined later and let $\beta \in (0, 1)$ be a fixed constant. We consider the following deadline protocol ${\cal Q}$ with deadline $t_0$, which is defined as follows: The $t_0-1$ time steps before the deadline are partitioned into $k+1$ consecutive intervals $I_1, I_2, \ldots, I_{k+1}$, where $k=k(n)$ is the unique integer satisfying $\beta^{k+1} n \leq \sqrt{n} < \beta^{k} n$. For any $j \in \{1, \ldots, k+1\}$, define $n_j = \beta^j n$. For $j \in \{1, \ldots, k\}$ the length of interval $I_j$ is $\ell_j = \left\lfloor\frac{e}{\beta} n_j \right\rfloor$. Interval $I_{k+1}$ is special and has length $\ell_{k+1}=n$.  In particular, this gives 
\begin{equation*}
t_0 \stackrel{def}{=} 1 + \sum_{j = 1}^{k+1} \ell_j \leq 1+n + en \sum_{j = 1}^{k} \beta^{j-1} = 1+n+en \frac{1 - \beta^{k-1}}{1-\beta} \leq n \left( 1 + \frac{e}{1-\beta}\right),
\end{equation*}
where the last inequality holds for any constant $\beta \in (0, 1)$ and $n \to \infty$. For any $t \geq 1$, the decision rule at time $t$ for protocol ${\cal Q}$ is given by 
\begin{equation}
{\cal Q}_t = \left\{ 
\begin{array}{ll}
	\frac{1}{n_j}, & \quad \textrm{if $t \in I_j, j = 1, 2 , \ldots, k+1$} \\
	1, & \quad \textrm{if $t \geq t_0$.}
\end{array}
\right.
\end{equation}

Notice that, by definition, ${\cal Q}$ is an age-based protocol. Furthermore, if at least two out of $n$ players use protocol ${\cal Q}$, then, no matter what protocol the rest of the players use, there is a non-zero probability that there will be no successful transmission until the deadline $t_0$, and thus all players will remain pending for ever. In particular, this is at least the probability that the two players using ${\cal Q}$ attempt a transmission in every step until $t_0$, which happens with probability $\prod_{t=1}^{t_0-1} ({\cal Q}_t)^2 \geq \frac{1}{n^{t_0}}>0$. Therefore, if there are at least two players using ${\cal Q}$, the expected latency of any player is $\infty$, hence ${\cal Q}$ is in equilibrium, for any $n \geq 3$ and deadline $t_0$.

In Theorem \ref{thm:efficiency} we prove that ${\cal Q}$ is also efficient; when all players in the system use protocol ${\cal Q}$, then with high probability all players will successfully transmit before the deadline $t_0$. For the proof, we use two elementary Lemmas that formalize the fact that, in each interval, a significant number of players successfully transmit with high probability. %Lemma \ref{lem:deadlineInterval0k} concerns the first $k$ intervals, while Lemma \ref{lem:deadlineIntervalk+1} concenrs interval $I_{k+1}$. 
For the proofs, we employ standard concentration results from probability theory. Full details can be found in Appendix \ref{appendix:lem:deadlineInterval0k} and \ref{appendix:lem:deadlineIntervalk+1}.

\begin{lemma} \label{lem:deadlineInterval0k}
Assume that all players in the system use protocol ${\cal Q}$. For any $j\in\{1, \ldots, k\}$, if the number of pending players before interval $I_j$ is at most $n_j$, then after $I_{j}$, with probability at least $1-\exp(-\frac{1}{3} \beta^{j+2} n)$ there will be at most $n_{j+1}$ pending players.	
\end{lemma}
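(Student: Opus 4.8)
The plan is to track the number of pending players throughout interval $I_j$ and to show that, except with the stated probability, enough successful transmissions occur inside $I_j$ to bring this number down to $n_{j+1}$. The backbone of the argument is a uniform lower bound on the per-step probability of a successful transmission whenever the pending count lies in the ``working range'' $(n_{j+1}, n_j]$, followed by a stochastic-domination reduction to a sum of i.i.d.\ Bernoulli trials and a Chernoff bound.

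First I would establish the per-step success bound. Suppose $m$ players are pending at the start of some step of $I_j$; since all players use ${\cal Q}$ and $t \in I_j$, each transmits independently with probability $\frac{1}{n_j}$, so the probability of exactly one transmission is $m \cdot \frac{1}{n_j} \cdot \left(1 - \frac{1}{n_j}\right)^{m-1}$. The pending count never increases and starts at most $n_j$, hence $m \leq n_j$ throughout and $\left(1 - \frac{1}{n_j}\right)^{m-1} \geq \left(1 - \frac{1}{n_j}\right)^{n_j - 1} \geq \frac{1}{e}$. In the regime $m > n_{j+1} = \beta n_j$ we also have $\frac{m}{n_j} > \beta$, so the per-step success probability is at least $\frac{\beta}{e}$.

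Next I would set up the domination. Let $M_s$ be the number of pending players after $s$ steps of $I_j$ (so $M_0 \leq n_j$), let $Z_s$ indicate a success at step $s$, and let $T = \min\{s : M_s \leq n_{j+1}\}$. The bad event ``more than $n_{j+1}$ players remain after $I_j$'' is precisely $\{T > \ell_j\}$; by monotonicity of $M_s$ this event forces $M_{s-1} > n_{j+1}$ for every step $s \leq \ell_j$, so by the bound above each such $Z_s$ conditionally dominates a Bernoulli$(\frac{\beta}{e})$ variable. Coupling step by step (using a common uniform at each step in the working range) against i.i.d.\ $B_s \sim \mathrm{Bernoulli}(\frac{\beta}{e})$, the total number of successes $\sum_{s=1}^{\ell_j} Z_s$ then dominates $\sum_{s=1}^{\ell_j} B_s \sim \mathrm{Bin}(\ell_j, \frac{\beta}{e})$ on this event. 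Since on $\{T > \ell_j\}$ the number of successes is $M_0 - M_{\ell_j} < n_j - n_{j+1} = (1-\beta)n_j$, we obtain $\Pr[T > \ell_j] \leq \Pr[\mathrm{Bin}(\ell_j, \frac{\beta}{e}) < (1-\beta)n_j]$.

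Finally I would apply a Chernoff lower-tail bound. As $\ell_j = \lfloor \frac{e}{\beta} n_j \rfloor$, the mean of the dominating binomial is $\mu = \ell_j \cdot \frac{\beta}{e} \in (n_j - \frac{\beta}{e},\, n_j]$, comfortably above the threshold $(1-\beta)n_j$. Writing $(1-\beta)n_j = (1-\delta)\mu$ gives $\delta \to \beta$ as $n \to \infty$, and the estimate $\Pr[X \leq (1-\delta)\mu] \leq \exp(-\delta^2 \mu / 2)$ yields a bound of order $\exp(-\frac12 \beta^{j+2} n (1 - o(1)))$. The only real obstacle is quantitative: because of the floor in $\ell_j$ and the slack in $\delta \to \beta$ and $\mu \to n_j$, the clean constant $\frac12$ degrades for finite $n$, and one must check that for all (large) $n$ the exponent is at least $\frac13 \beta^{j+2} n$, matching the statement. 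This is routine constant-chasing rather than a conceptual difficulty.
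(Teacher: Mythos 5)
Your proposal is correct and follows essentially the same route as the paper's proof: a per-step success probability lower bound of $\frac{\beta}{e}$ while the pending count exceeds $n_{j+1}$, stochastic domination of the success count by a $\mathrm{Bin}\left(\ell_j, \frac{\beta}{e}\right)$ variable, and a Chernoff lower-tail bound with the constant absorbed using $n_j \geq \sqrt{n}$. Your stopping-time coupling is simply a more carefully formalized version of the domination step that the paper states informally, so there is nothing substantive to add.
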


\begin{lemma} \label{lem:deadlineIntervalk+1}
If the number of pending players at the start of interval $I_{k+1}$ is at most $n_{k+1}$, then after interval $I_{k+1}$, with probability at least $1-\exp\left( - \frac{1}{3} n_{k+1}\right)$ all players will have successfully transmitted.	
\end{lemma}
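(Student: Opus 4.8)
The plan is to show that \emph{every} player that is pending at the start of $I_{k+1}$ transmits successfully during the $n$ steps of $I_{k+1}$, and then take a union bound over these players. First I would record the structural facts I will use. Throughout $I_{k+1}$ each pending player transmits independently with probability ${\cal Q}_t = \frac{1}{n_{k+1}}$; the number of pending players is non-increasing, so by hypothesis it never exceeds $n_{k+1}$ during the interval; and, by the defining inequality $\beta^{k+1}n \le \sqrt n$ of $k$, we have $n_{k+1} = \beta^{k+1}n \le \sqrt n$, hence $n \ge n_{k+1}^2$. Since $\ell_{k+1} = n$, the interval supplies exactly $n$ steps, i.e.\ quadratically more steps than players to be cleared, which is the source of the slack that makes the bound possible. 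Note also that ``after $I_{k+1}$ all players have transmitted'' is equivalent to ``every player pending at the start of $I_{k+1}$ succeeds within $I_{k+1}$,'' since the pending set only shrinks.

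The heart of the argument is a per-step lower bound on the success probability of a single player. Fix a player $P$ pending at the start of $I_{k+1}$ and a step $t$ of the interval. Conditioned on the entire history up to $t$ and on $P$ still being pending, $P$ succeeds at $t$ exactly when $P$ transmits (probability $\frac{1}{n_{k+1}}$) and none of the other pending players does. At most $n_{k+1}-1$ other players are pending, each silent independently with probability $1-\frac{1}{n_{k+1}}$, so the latter event has probability at least $\left(1-\frac{1}{n_{k+1}}\right)^{n_{k+1}-1} \ge \frac{1}{e}$. Hence, uniformly over the history, $P$ succeeds at step $t$ with conditional probability at least $\frac{1}{e\,n_{k+1}}$. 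Iterating this bound over the $n$ steps yields that the probability $P$ never succeeds during $I_{k+1}$ is at most $\left(1-\frac{1}{e\,n_{k+1}}\right)^{n} \le \exp\!\left(-\frac{n}{e\,n_{k+1}}\right) \le \exp\!\left(-\frac{n_{k+1}}{e}\right)$, where the last inequality uses $n \ge n_{k+1}^2$.

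Finally I would union bound over the at most $n_{k+1}$ players pending at the start of $I_{k+1}$: the probability that some one of them fails to transmit during $I_{k+1}$ is at most $n_{k+1}\exp(-n_{k+1}/e)$. It remains to absorb the polynomial prefactor. Since $\frac{1}{e}-\frac{1}{3} > 0$ and $\ln n_{k+1} = o(n_{k+1})$, for all large enough $n$ we have $\ln n_{k+1} \le \left(\frac{1}{e}-\frac{1}{3}\right)n_{k+1}$, which rearranges to $n_{k+1}\exp(-n_{k+1}/e) \le \exp\!\left(-\frac{1}{3}n_{k+1}\right)$. This is legitimate because $n_{k+1} > \beta\sqrt n \to \infty$, so ``large enough $n$'' is consistent with the asymptotic (``with high probability'') reading of the statement; this yields the claimed failure bound $\exp(-\frac{1}{3}n_{k+1})$.

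I expect the main obstacle to be the rigorous handling of the iteration in the second paragraph: the per-step success probability of $P$ is only a \emph{conditional} quantity, and the steps are correlated through the evolving set of pending players, so I cannot treat them as independent trials. The correct phrasing is to bound $\Pr[P \text{ pending and unsuccessful through step } t]$ and observe that, on that event, the conditional probability of not succeeding at step $t+1$ is at most $1-\frac{1}{e\,n_{k+1}}$ \emph{pointwise} on every history; telescoping this supermartingale-type inequality over the $n$ steps gives the product bound. The only other point needing a word of care is quantifying ``sufficiently large $n$'' in the final step, which is harmless given that efficiency is defined asymptotically.
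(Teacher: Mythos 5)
Your proposal is correct and follows essentially the same route as the paper's proof: a per-step success probability of at least $\frac{1}{n_{k+1}}\left(1-\frac{1}{n_{k+1}}\right)^{n_{k+1}-1} \geq \frac{1}{e\,n_{k+1}}$ for a fixed pending player, iterated over the $\ell_{k+1}=n$ steps using $n/n_{k+1} \geq n_{k+1}$ to get a failure probability of at most $\exp(-n_{k+1}/e)$ per player, followed by a union bound over the at most $n_{k+1}$ pending players and absorption of the prefactor into $\exp(-\frac{1}{3}n_{k+1})$. Your explicit handling of the correlation between steps (the pointwise conditional bound telescoped over the history) is a point the paper glosses over, but it is a refinement of the same argument, not a different one.
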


We are now ready to prove our main Theorem.

\begin{theorem} \label{thm:efficiency}
Protocol ${\cal Q}$ is efficient. In particular, for any constant $\beta \in (0, 1)$, when all players use ${\cal Q}$, the probability that there is a pending player after time $t_0 \leq n \left( 1 + \frac{e}{1-\beta}\right)$ is at most $\exp(-\Theta(\sqrt{n}))$.
%all players successfully transmit within the first $t_0 \leq n \left( 1 + \frac{e}{1-\beta}\right)$ time steps with probability at least 
%protocol ${\cal Q}$ is $(n, 1-\exp(-\Theta(\sqrt{n})), t_0)$-efficient, where $t_0 \leq n \left( 1 + \frac{e}{1-\beta}\right)$.
\end{theorem}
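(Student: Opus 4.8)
The plan is to chain Lemmas~\ref{lem:deadlineInterval0k} and~\ref{lem:deadlineIntervalk+1} together by an induction over the intervals $I_1, \ldots, I_{k+1}$, tracking the number of pending players at each interval boundary, and then to collect the per-interval failure probabilities with a single union bound. Let $P_j$ denote the number of players still pending at the end of interval $I_j$ (equivalently, at the start of $I_{j+1}$), with $P_0 = n$. The induction target is the family of events $\mathcal{G}_j = \{P_j \le n_{j+1}\}$ for $j = 1, \ldots, k$, together with $\mathcal{G}_{k+1} = \{P_{k+1} = 0\}$, i.e.\ that every player has transmitted by the deadline $t_0$.

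First I would set up the inductive step. Conditioned on $\mathcal{G}_{j-1}$, i.e.\ on there being at most $n_j$ pending players at the start of $I_j$, Lemma~\ref{lem:deadlineInterval0k} guarantees that $\mathcal{G}_j$ holds except with probability at most $\exp(-\frac{1}{3} \beta^{j+2} n)$, for each $j \in \{1, \ldots, k\}$. Since for $j \ge 2$ the required precondition $\{P_{j-1} \le n_j\}$ is exactly the conclusion of the previous step, the chain propagates cleanly down to $\mathcal{G}_k = \{P_k \le n_{k+1}\}$, which is precisely the precondition of Lemma~\ref{lem:deadlineIntervalk+1}. That last lemma then empties the system during $I_{k+1}$ except with probability at most $\exp(-\frac{1}{3} n_{k+1})$. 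The one step needing extra care is the base case: the precondition of Lemma~\ref{lem:deadlineInterval0k} at $j=1$ asks for at most $n_1 = \beta n$ pending players before $I_1$, whereas initially there are $P_0 = n$, so I would reconcile the first interval separately (for instance by checking that the lemma's concentration argument applies verbatim when seeded with the full population, or by treating $I_1$ as a dedicated base case that brings the count into the regime $\le n_2$).

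Next I would assemble the union bound. Writing $F_j$ for the event that stage $j$ fails (its precondition holds but its conclusion does not), the probability that some player is still pending after $t_0$ is at most $\sum_{j=1}^{k} \Pr[F_j] + \Pr[F_{k+1}] \le \sum_{j=1}^{k} \exp(-\frac{1}{3} \beta^{j+2} n) + \exp(-\frac{1}{3} n_{k+1})$. The key quantitative observation is that, since $\beta \in (0,1)$, the exponents $\beta^{j+2} n$ decrease in $j$, so the worst term among the first $k$ is the one at $j = k$, namely $\exp(-\frac{1}{3} \beta^{2}\cdot \beta^{k} n)$; the defining inequality $\beta^{k} n > \sqrt{n}$ bounds it by $\exp(-\frac{1}{3} \beta^{2} \sqrt{n})$, and $n_{k+1} = \beta^{k+1} n > \beta \sqrt{n}$ bounds the last term by $\exp(-\frac{1}{3} \beta \sqrt{n})$. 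As there are only $k = O(\log n)$ terms, the logarithmic factor is absorbed into the exponent and the total is $\exp(-\Theta(\sqrt{n}))$, matching the claimed bound; along the way I would also recall the already-established estimate $t_0 \le n\left(1 + \frac{e}{1-\beta}\right)$ so that all reductions provably occur before the deadline.

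The main obstacle I anticipate is not the union bound, which is routine once the two lemmas are granted, but the correct seeding of the induction at the first interval, where the starting population is $n$ rather than $n_1$. A secondary point to confirm is that conditioning on the good events of earlier intervals does not corrupt the concentration used inside each lemma; because each lemma's guarantee holds for \emph{any} starting configuration with the stated number of pending players, conditioning on $\mathcal{G}_{j-1}$ is harmless and the telescoped bound goes through.
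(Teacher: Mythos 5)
Your overall plan coincides with the paper's proof: the paper bounds the failure probability by the event that some interval $I_j$, $j \leq k$, ends with more than $n_{j+1}$ pending players or that $I_{k+1}$ ends with a pending player, applies Lemma~\ref{lem:deadlineInterval0k} and Lemma~\ref{lem:deadlineIntervalk+1} to each such event, and sums $\exp\left(-\frac{1}{3}n_{k+1}\right) + \sum_{j=1}^{k}\exp\left(-\frac{1}{3}\beta^{2}n_{j}\right) \leq (k+1)\exp\left(-\Theta(\sqrt{n})\right)$ using $n_j \geq n_{k+1} \geq \beta\sqrt{n}$ and $k = \Theta(\log n)$, exactly as you propose.

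The base case you isolate is, however, a genuine gap --- and it is one the paper's own proof silently skips: Lemma~\ref{lem:deadlineInterval0k} at $j=1$ requires at most $n_1 = \beta n$ pending players before $I_1$, while the game starts with $n$. Unfortunately, neither of your two suggested repairs closes it for every constant $\beta \in (0,1)$, because the mismatch sits in the protocol, not in the bookkeeping. During $I_1$ each pending player transmits with probability $1/n_1 = 1/(\beta n)$, so while $r \geq n/2$ players are pending the per-round success probability is $\frac{r}{\beta n}\left(1-\frac{1}{\beta n}\right)^{r-1} \approx x e^{-x}$ with $x = r/(\beta n) \geq \frac{1}{2\beta}$; since $xe^{-x}$ is decreasing for $x \geq 1$, this is at most roughly $\frac{1}{2\beta}e^{-1/(2\beta)}$ whenever $\beta \leq \frac{1}{2}$. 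For $\beta = 0.1$, say, this is about $0.034$, so the expected number of successes over the $\ell_1 = \lfloor en \rfloor$ rounds of $I_1$ is at most about $0.09\,n$ --- nowhere near the $(1-\beta^2)n$ departures needed to reach $n_2$, nor even the $(1-\beta)n$ needed to reach $n_1$. By a Chernoff bound, $I_1$ then ends with more than $n/2$ pending players with probability $1-e^{-\Omega(n)}$; the later intervals, whose transmission probabilities are still larger, only aggravate the overload, and from the deadline onward everyone collides forever. So the lemma's argument does not apply ``verbatim'' from a full population, $I_1$ cannot serve as a dedicated base case, and in fact the theorem as stated fails for small constant $\beta$. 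The genuine fix is to modify the construction rather than the proof: shift the indexing to $n_j = \beta^{j-1}n$ (equivalently, prepend an interval with transmission probability $1/n$ and length $\left\lfloor \frac{e}{\beta}n \right\rfloor$), after which your induction and union bound go through word for word, at the cost of replacing the constant $\frac{e}{1-\beta}$ in the bound on $t_0$ by $\frac{e}{\beta(1-\beta)}$.
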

\proof It suffices to show that with high probability every player will have successfully transmitted before $t_0$. Note that, the probability that there are still pending players at $t_0 = \Theta(n)$ is upper bounded by the probability that (a) there exists $j \in \{1, 2, \ldots, k\}$ such that, at the end of interval $I_j$ there are more than $n_{j+1}$ pending players, or (b) there are still pending players after interval $I_{k+1}$.

Therefore, by Lemma~\ref{lem:deadlineInterval0k} and Lemma~\ref{lem:deadlineIntervalk+1} and the union bound, the probability that not all players successfully transmit before $t_0$ is at most 
\begin{equation}
\exp\left( - \frac{1}{3} n_{k+1}\right) + \sum_{j=1}^k \exp\left(-\frac{1}{3}\beta^2 n_j\right).
\end{equation}
Since $n_{j} \geq n_{k+1} \geq \beta \sqrt{n}$, for any $j \in \{1, 2, \ldots, k\}$, the above upper bound becomes $(k+1) \exp\left( - \Theta(\sqrt{n})\right)$. The proof is concluded by noting that, by definition of $k$, we have $k = \Theta(\log{n})$. \qed

\medskip

We note that, in our analysis, $\beta \in (0, 1)$ can be any constant arbitrarily close to 0, therefore, by Theorem \ref{thm:efficiency}, the upper bound on the latency of protocol ${\cal Q}$ can be as small as $(1+e)n+o(n)$ with high probability.

\newpage

%\subparagraph*{Acknowledgements}

%I want to thank \dots

%%
%% Bibliography
%%

%% Either use bibtex (recommended), but commented out in this sample

\bibliography{aloha}

\begin{thebibliography}{10}

\bibitem{Abr70}
N.~Abramson.
\newblock {The ALOHA system: Another alternative for computer communications}.
\newblock In {\em Proceedings of the November 17-19, 1970, fall joint computer
  conference}, pages 281--285. ACM New York, NY, USA, 1970.

\bibitem{1031826}
E.~Altman, R.~El Azouzi, and T.~Jim\'{e}nez.
\newblock Slotted aloha as a game with partial information.
\newblock {\em Comput. Netw.}, 45(6):701--713, 2004.

\bibitem{Altman05azouzi}
E.~Altman, D.~Barman, A.~Benslimane, and R.~El Azouzi.
\newblock Slotted aloha with priorities and random power.
\newblock In {\em Proc. IEEE Infocom}, 2005.

\bibitem{AMPP08}
V.~Auletta, L.~Moscardelli, P.~Penna, and G.~Persiano.
\newblock Interference games in wireless networks.
\newblock In {\em WINE}, pages 278--285, 2008.

\bibitem{1074023}
M.~Bender, M.~Farach-Colton, S~He, B.~Kuszmaul, and C.~Leiserson.
\newblock Adversarial contention resolution for simple channels.
\newblock In {\em SPAA '05}, pages 325--332. ACM, 2005.

\bibitem{capetanakis-b}
J.~Capetanakis.
\newblock Generalized tdma: The multi-accessing tree protocol.
\newblock {\em IEEE Transactions on Communications}, 27(10):1476--1484, 1979.

\bibitem{capetanakis-a}
J.~Capetanakis.
\newblock Tree algorithms for packet broadcast channels.
\newblock {\em IEEE Transactions on Information Theory}, 25(5):505--515, 1979.

\bibitem{CLP14}
George Christodoulou, Katrina Ligett, and Evangelia Pyrga.
\newblock Contention resolution under selfishness.
\newblock {\em Algorithmica}, 70(4):675--693, 2014.

\bibitem{Fiat07}
A.~Fiat, Y.~Mansour, and U.~Nadav.
\newblock Efficient contention resolution protocols for selfish agents.
\newblock In {\em SODA '07}, pages 179--188, Philadelphia, PA, USA, 2007. SIAM.

\bibitem{140906}
Mih\'{a}ly Ger\'{e}b-Graus and Thanasis Tsantilas.
\newblock Efficient optical communication in parallel computers.
\newblock In {\em SPAA '92}, pages 41--48, New York, NY, USA, 1992. ACM.

\bibitem{310333}
L.~A. Goldberg and P.~D. MacKenzie.
\newblock Analysis of practical backoff protocols for contention resolution
  with multiple servers.
\newblock {\em J. Comput. Syst. Sci.}, 58(1):232--258, 1999.

\bibitem{355567}
L.~A. Goldberg, P.~D. Mackenzie, M.~Paterson, and A.~Srinivasan.
\newblock Contention resolution with constant expected delay.
\newblock {\em J. ACM}, 47(6):1048--1096, 2000.

\bibitem{G03}
Leslie~Ann Goldberg.
\newblock Notes on contention resolution.
\newblock {\em
  http://www.cs.ox.ac.uk/people/leslieann.goldberg/contention.html}, 2002.

\bibitem{GW85}
A.~Greenberg and S.~Winograd.
\newblock A lower bound on the time needed in the worst case to resolve
  conflicts deterministically in multiple access channels.
\newblock {\em J. ACM}, 32(3):589--596, 1985.

\bibitem{hayes78}
Hayes J.
\newblock An adaptive technique for local distribution.
\newblock {\em IEEE Transactions on Communications}, 26(8):1178--1186, 1978.

\bibitem{KoutsoupiasP12}
Elias Koutsoupias and Katia Papakonstantinopoulou.
\newblock Contention issues in congestion games.
\newblock In {\em Automata, Languages, and Programming - 39th International
  Colloquium, {ICALP} 2012, Warwick, UK, July 9-13, 2012, Proceedings, Part
  {II}}, pages 623--635, 2012.

\bibitem{1154073}
R.T. Ma, V.~Misra, and D.~Rubenstein.
\newblock Modeling and analysis of generalized slotted-aloha mac protocols in
  cooperative, competitive and adversarial environments.
\newblock In {\em ICDCS '06}, page~62, Washington, DC, USA, 2006. IEEE.

\bibitem{274816}
P.~D. MacKenzie, C.~G. Plaxton, and R.~Rajaraman.
\newblock On contention resolution protocols and associated probabilistic
  phenomena.
\newblock {\em J. ACM}, 45(2):324--378, 1998.

\bibitem{Menache08}
I.~Menache and N.~Shimkin.
\newblock Efficient rate-constrained nash equilibrium in collision channels
  with state information.
\newblock In {\em INFOCOM 2008.}, pages 403--411, 2008.

\bibitem{MB76}
R.~Metcalfe and D.~Boggs.
\newblock Distributed packet switching for local computer networks.
\newblock {\em Communications of the ACM}, 19:395--404, 1976.

\bibitem{Nbook}
J.R. Norris.
\newblock {\em Markov Chains}.
\newblock Cambridge University Press, 1998.

\bibitem{Pl80}
Loren~K. Platzman.
\newblock Optimal infinite-horizon undiscounted control of finite probabilistic
  systems.
\newblock {\em SIAM Journal on Control and Optimization}, 18(4):362--380, 1980.

\bibitem{MDPbook}
Martin~L. Puterman.
\newblock {\em Markov Decision Processes: Discrete Stochastic Dynamic
  Programming}.
\newblock John Wiley \& Sons, Inc., 1994.

\bibitem{903752}
P.~Raghavan and E.~Upfal.
\newblock Stochastic contention resolution with short delays.
\newblock Technical report, Weizmann Science Press of Israel, Jerusalem,
  Israel, Israel, 1995.

\bibitem{Rob72}
L.~Roberts.
\newblock Aloha packet system with and without slots and capture.
\newblock {\em SIGCOMM Comput. Commun. Rev.}, 5(2):28--42, April 1975.

\bibitem{RossIntrobook}
Sheldon~R. Ross.
\newblock {\em A First Course in Probability}.
\newblock Pearson, 2012.

\bibitem{TK75}
F.A. Tobagi and L.~Kleinrock.
\newblock Packet switching in radio channels: Part ii--the hidden terminal
  problem in carrier sense multiple-access and the busy-tone solution.
\newblock {\em IEEE Transactions on Communications}, 23(12):1417--1433, 1975.

\bibitem{TsM}
B.~S. Tsybakov and V.~A. Mikhailov.
\newblock Free synchronous packet access in a broadcast channel with feedback.
\newblock {\em Problems of Information Transmission}, 14(4):259--280, 1978.

\bibitem{1285895}
D.~Wang, C.~Comaniciu, and U.~Tureli.
\newblock Cooperation and fairness for slotted aloha.
\newblock {\em Wirel. Pers. Commun.}, 43(1):13--27, 2007.

\bibitem{1288109}
D.~Zheng, W.~Ge, and J.~Zhang.
\newblock Distributed opportunistic scheduling for ad-hoc communications: an
  optimal stopping approach.
\newblock In {\em MobiHoc '07}, pages 1--10. ACM, 2007.

\end{thebibliography}

%% .. or use bibitems explicitely

%\nocite{Simpson}

%\begin{thebibliography}{50}
%\bibitem{Simpson} Homer J. Simpson. \textsl{Mmmmm...donuts}. Evergreen Terrace Printing Co., Springfield, Somewhere, USA, 1998
%\end{thebibliography}

\appendix

%%%%%%%%%\\input{appendix.tex}

\section{Proof of Claim \ref{claim:protocol}}
\label{appendix:claim:protocol}
%\noindent \textbf{Proof of Claim \ref{claim:protocol}.}
  The situation from Alice's perspective can be modeled as a Markov
  chain ${\cal M}$ with state space $\{A, B, C, D\}$. $A$ is the
  initial state where both players are pending (and they both know
  this). $A$ is reached either in $t=1$, or when Alice transmitted in
  the previous time step and there was a collision. State $B$ models
  the case when both players are pending, but Alice does not know
  this, because she did not transmit in the previous time step. State
  $C$ is reached when only Alice is pending; notice that, by
  definition of the protocol, there is no way for Alice to distinguish with
  certainty between states $B$ and $C$ if both herself and Bob use
  $f$. Finally, $D$ is the state in which Alice has successfully
  transmitted. The transition graph of ${\cal M}$ is shown in Figure
  \ref{MC-both-f}.

For example, we can see from the transition graph that the probability
that we visit state $A$ at time $t+1$, given that we are in $B$ at $t$
is given by $\Pr({\cal M}_{t+1} = A| {\cal M}_t = B) = 1$. Indeed, if
${\cal M}_t = B$, neither player transmitted at $t$, so both will
transmit with probability 1 at $t+1$, causing a collision, after which
Alice (and also Bob) can deduce that all players are still
pending. Similarly, $\Pr({\cal M}_{t+1} = D| {\cal M}_t = C) = 1$,
because, being at $C$ means that only Bob transmitted (successfully)
at $t$ and so Alice will transmit (also successfully, being the only
pending player) at $t+1$ with probability 1.

Clearly, $C_{Alice}^{f,1}$ is equal to the expected hitting time
$k_{A}^{D}$ that ${\cal M}$ needs to reach state $D$, given that we
start from $A$. By definition, we have $k_C^D=1$, $k_D^D=0$, and by
the Markov property, we get $k_{B}^D=k_A^D+1$ and $k_A^D = 1 +
\frac{4}{9} k_A^D + \frac{1}{9} k_B^D + \frac{2}{9} k_C^D +
\frac{2}{9} k_D^D$. By rearranging terms and making the substitutions,
we conclude that $C_{Alice}^{f,1} = 3$. 

Calculating $C_{Alice}^{f, 0}$ is a bit more tricky, because
since Alice did not attempt transmission at the previous slot, she
cannot be certain in which state she is, but she knows that is at
state $B$ with probability $1/3$ and in $C$ with $2/3$. Therefore 
$C_{Alice}^{f,0}=\frac{1}{3}k_B^D+\frac{2}{3}k_C^D=2$. \qed

%%%%%%%%%%%%%%%%%%%%%%%%%%%%%%%%%%%%%%%%%%%%%%%%%%%%%%%%%%%%%%%%%%%%%%%%%%%%%

\section{Proof of Corollary \ref{cor:nonblocking}}
\label{appendix:cor:nonblocking}

Assume for the sake of contradiction that $f$ is in equilibrium
and let $\tau^* \stackrel{def}{=} \left\lfloor \mathbb{E}[T_i|
  \vec{h}_{i, 0}, f] \right\rfloor$ be finite, where $i$ is a fixed
player using $f$. Consider the protocol $g = g(\tau^*)$ as defined in
(\ref{eq-deviator}), where the first $\tau^*$ terms of $\pi$ are set
equal to 0. Clearly, any player using $g$ has expected latency
at least $\tau^*+1$, irrespectively of the transmissions of
the other players. Notice also that $\pi$ is consistent with $f$ up to
$\tau^*$, since $\Pr\{\vec{h}_{i, \tau^*} = (0, \ldots, 0) | f\} =
\prod_{t=1}^{\tau^*} (1-p_{i, t}) > 0$. Therefore, by Lemma
\ref{lemma:bestresponses} we have that $\tau^*+1 > \mathbb{E}[T_i|
\vec{h}_{i, 0}, f] = \mathbb{E}[T_{i}|\vec{h}_{i, 0}, (f_{-i}, g)] \geq
\tau^*+1$, which is a contradiction. We conclude that either $f$ is not
in equilibrium, or $\tau^*$ is $\infty$. \qed

%%%%%%%%%%%%%%%%%%%%%%%%%%%%%%%%%%%%%%%%%%%%%%%%%%%%%%%%%%%%%%%%%%%%%%%%%%%%%

\section{Proof of Claim \ref{claim:agebased}}
\label{appendix:claim:agebased}

For any time
$t$, define $Z^{f}_{t}$ to be the number of non-blocking probabilities
of the protocol $f$ up to $t$, i.e. $Z^{f}_{t} \stackrel{def}{=}
\sum_{t' \leq t} (1 - \left\lfloor f_{t'} \right\rfloor)$. Set $\tau'
\stackrel{def}{=} \inf\{t: f_t=1, Z^{f}_{t} \geq n-1 \}$. Assume for
the sake of contradiction that there does not exist a $\tau^*$ with
the properties described in the claim. In particular, this means that
$\tau' = \infty$. However, the latter can happen if one of the
following cases is true:

\begin{description}

\item[(i)] There is no finite $\tau$ such that $f_{\tau} = 1$.

\item[(ii)] There exists finite $\tau$ such that $f_{\tau} = 1$, $Z^{f}_{t} \leq n-2$ and $f_t = 1$, for all $t \geq \tau$.

\item [(iii)] There exists finite $\tau$ such that $f_{\tau} = 1$,
  $Z^{f}_{t} \leq n-2$ and $f_t < 1$, for all $t \geq \tau$.

\end{description}  

We now prove that in all those cases we get a contradiction. Case (i)
comes in contradiction with Corollary \ref{cor:nonblocking}. % If case
% (i) holds, then protocol $f$ is also non-blocking. This is a
% contradiction, because by Corollary \ref{cor:nonblocking}, there is no
% non-blocking anonymous age-based protocol that is in equilibrium and
% has finite expected latency.

If case (ii) holds, then clearly, if all players use $f$, at most
$n-2$ players can successfully transmit before $\tau$ and the rest
will remain pending for ever. But this means that the expected latency
of a player $i$ using $f$ is at least
\begin{equation*}
\Pr\{\textrm{$i$ does not successfully transmit before $\tau$}| \vec{h}_{i, 0}, f\} \cdot \infty = \infty,
\end{equation*}
which leads to a contradiction, since we assumed $\mathbb{E}[T_i| \vec{h}_{i, 0},
f] < \infty$.

Suppose now that case (iii) holds. Consider the protocol $g$ defined
as follows:

\begin{equation} 
g \stackrel{def}{=} \left\{
\begin{array}{ll}
	0, & \quad \textrm{if $f_t < 1$, for $1 \leq t \leq \mathbb{E}[T_i| \vec{h}_{i, 0}, f]$} \\
	1, & \quad \textrm{if $f_t = 1$, for $1 \leq t \leq \mathbb{E}[T_i| \vec{h}_{i, 0}, f]$} \\
	f_t, & \quad \textrm{for $t > \mathbb{E}[T_i| \vec{h}_{i, 0}, f]$.}
\end{array}
 \right.
\end{equation}
Let $i$ be a fixed player (say Alice). Notice that, if all other
players use $f$ and Alice uses $g$, then Alice has expected latency
strictly larger than $\mathbb{E}[T_i| \vec{h}_{i, 0}, f]$; indeed, for
any $t \leq \mathbb{E}[T_i| \vec{h}_{i, 0}, f]$, Alice only attempts a
transmission when $f_t = 1$ and there is at least one more other
pending player using $f$, and so there is a collision. However, since the initial (deterministic) sequence of $\left\lfloor\mathbb{E}[T_i| \vec{h}_{i, 0}, f] \right\rfloor$ transmissions of 
$g$ is consistent with $f$, by Lemma
\ref{lemma:bestresponses} we have that $\mathbb{E}[T_i| \vec{h}_{i,
  0}, f] = \mathbb{E}[T_{i}|\vec{h}_{i, 0}, (f_{-i}, g)] >
\mathbb{E}[T_i| \vec{h}_{i, 0}, f]$, which is a contradiction. This
completes the proof of the claim. \qed

% In conclusion, all 3 cases lead to a contradiction, so $\tau'$ must be
% finite and we can have $\tau^* = \tau'$.

%%%%%%%%%%%%%%%%%%%%%%%%%%%%%%%%%%%%%%%%%%%%%%%%%%%%%%%%%%%%%%%%%%%%%%%%%%%%%

\section{Proof of Lemma \ref{lem:deadlineInterval0k}}
\label{appendix:lem:deadlineInterval0k}

Fix $j \in \{1, \ldots, k\}$ and assume that the precondition of the lemma is fulfilled, i.e., before interval $I_j$ there are at most $n_{j}$ pending players. 
Let $r_t$ denote the number of pending players at time $t$. In particular, for any $t \in I_j$, if the preconditions of the lemma is fulfilled, we have $r_t\leq n_j$. Therefore the probability of a successful transmission in round $t \in I_j$ is given by

\begin{equation*}
r_t {\cal Q}_t (1-{\cal Q}_t)^{r_t-1}	\geq r_t {\cal Q}_t (1-{\cal Q}_t)^{n_j-1} = r_t \frac{1}{n_j} \left( 1 - \frac{1}{n_j}\right)^{n_j-1} \geq \frac{1}{e}  \frac{r_t}{n_j},
\end{equation*}
where in the last inequality we used the fact that $\left(1-\frac{1}{x}\right)^{x-1} \geq \frac{1}{e}$, for any $x > 1$. Therefore, for any round $t\in I_j$, either we already have $r_t\leq n_{j+1}=\beta n_j$ pending players, or the probability of a successful transmission in round $t$ is at least $a \stackrel{def}{=} \frac{1}{e} \frac{n_{j+1}}{n_j} = \frac{\beta}{e}$.

Let now $X_j$ be the random variable counting the number of successful transmissions in interval $I_j$. Notice that, by the above discussion, given that at the start of interval $I_j$ there are at least $n_{j+1}$ pending players, $X_j$ stochastically dominates a Binomial random variable $Y_j \sim Bin(\ell_j, a)$, with mean value $\ell_j \cdot a$. Therefore, by a Chernoff bound (see \cite{RossIntrobook}), we get 

\begin{equation*}
\Pr(X_j<(1-\beta) \ell_j \cdot a) \leq \Pr(Y_j<(1-\beta) \ell_j \cdot a) \leq \exp\left(-\frac{1}{2}\beta^2 \ell_j \cdot a\right) \leq \exp\left(-\frac{1}{3}\beta^2 n_j\right),
\end{equation*}
where in the last inequality we used the fact that, by definition, $n_j \geq \sqrt{n}$, for all $j \leq k$, thus $\ell_j \cdot \frac{\beta}{e} \geq \frac{2}{3} n_j$. This directly implies the lemma. \qed

%%%%%%%%%%%%%%%%%%%%%%%%%%%%%%%%%%%%%%%%%%%%%%%%%%%%%%%%%%%%%%%%%%%%%%%%%%%%%

\section{Proof of Lemma \ref{lem:deadlineIntervalk+1}}
\label{appendix:lem:deadlineIntervalk+1}

Consider a fixed player (say Alice) that is pending at the start of interval $I_{k+1}$. Given that there are at most $n_{k+1} = \beta^{k+1} n$ pending players at any time step $t \in I_{k+1}$, the probability that Alice successfully transmits during $t$ is at least 

\begin{equation*}
{\cal Q}_t (1 - {\cal Q}_t)^{n_{k+1}-1} = \frac{1}{n_{k+1}} \left( 1 - \frac{1}{n_{k+1}}\right)^{n_{k+1}-1}.
\end{equation*} 
Therefore, since $|I_{k+1}| = \ell_{k+1} = n$, the probability that Alice is still pending after interval $I_{k+1}$ is at most 

\begin{equation} \label{eq:failureprobability}
\left( 1 -\frac{1}{n_{k+1}} \left( 1 - \frac{1}{n_{k+1}}\right)^{n_{k+1}-1}\right)^{n} \leq \exp\left( - \frac{n}{n_{k+1}} \left( 1 - \frac{1}{n_{k+1}}\right)^{n_{k+1}-1} \right).
\end{equation} 

Recall that, by definition, $k$ is the (unique) smallest integer satisfying $n_{k+1} \leq \sqrt{n} < n_k$. In particular, this implies that $n_{k+1} > \beta \sqrt{n}$, therefore $n_{k+1}$ goes to $\infty$ as $n \to \infty$. Additionally, we have that $\frac{n}{n_{k+1}} \geq n_{k+1}$. Therefore, using the fact that $\left(1-\frac{1}{x}\right)^{x-1} \geq \frac{1}{e}$, for any $x > 1$, the right hand side of (\ref{eq:failureprobability}) is at most $\exp\left( -\frac{1}{e} n_{k+1} \right)$. 

By the union bound, given that there are at most $n_{k+1}$ pending players at the start of interval $I_{k+1}$, the probability that there is at least one pending player after $I_{k+1}$ is at most $n_{k+1} \exp\left( -\frac{1}{e} n_{k+1} \right) \leq \exp\left( - \frac{1}{3} n_{k+1}\right)$, as stated in the Lemma. \qed

\end{document}